\def\eqref#1{equation~\ref{#1}}
\def\Eqref#1{Equation~\ref{#1}}
\def\1{\bm{1}}
\def\vtheta{{\bm{\theta}}}
\def\va{{\bm{a}}}
\def\vb{{\bm{b}}}
\def\vg{{\bm{g}}}
\def\vj{{\bm{j}}}
\def\vk{{\bm{k}}}
\def\vv{{\bm{v}}}
\def\vz{{\bm{z}}}
\DeclareMathAlphabet{\mathsfit}{\encodingdefault}{\sfdefault}{m}{sl}
\SetMathAlphabet{\mathsfit}{bold}{\encodingdefault}{\sfdefault}{bx}{n}
\def\gD{{\mathcal{D}}}
\def\gE{{\mathcal{E}}}
\def\gH{{\mathcal{H}}}
\def\gK{{\mathcal{K}}}
\def\gM{{\mathcal{M}}}
\def\gN{{\mathcal{N}}}
\def\gO{{\mathcal{O}}}
\def\gP{{\mathcal{P}}}
\def\gQ{{\mathcal{Q}}}
\def\gX{{\mathcal{X}}}
\DeclareMathOperator{\Tr}{Tr}
\newtheorem{theorem}{Theorem}[section]
\newtheorem{lemma}[theorem]{Lemma}
\newtheorem{definition}[theorem]{Definition}
\newcommand{\vth}{{\boldsymbol{\theta}}}
\newcommand{\effd}{D_{\textup{eff}}}
\begin{document}
\title{Universality of Many-body Projected Ensemble for \\ Learning Quantum Data Distribution}

\author{Quoc Hoan Tran}
\email{tran.quochoan@fujitsu.com}
\affiliation{Quantum Laboratory, Fujitsu Research, Fujitsu Limited, Kawasaki, Kanagawa 211-8588, Japan}

\author{Koki Chinzei}
\affiliation{Quantum Laboratory, Fujitsu Research, Fujitsu Limited, Kawasaki, Kanagawa 211-8588, Japan}

\author{Yasuhiro Endo}
\affiliation{Quantum Laboratory, Fujitsu Research, Fujitsu Limited, Kawasaki, Kanagawa 211-8588, Japan}

\author{Hirotaka Oshima}
\affiliation{Quantum Laboratory, Fujitsu Research, Fujitsu Limited, Kawasaki, Kanagawa 211-8588, Japan}

\date{\today}

\begin{abstract}
Generating quantum data by learning the underlying quantum distribution poses challenges in both theoretical and practical scenarios, yet it is a critical task for understanding quantum systems. A fundamental question in quantum machine learning (QML) is the universality of approximation: whether a parameterized QML model can approximate any quantum distribution. We address this question by proving a universality theorem for the Many-body Projected Ensemble (MPE) framework, a method for quantum state design that uses a single many-body wave function to prepare random states. This demonstrates that MPE can approximate any distribution of pure states within a 1-Wasserstein distance error. This theorem provides a rigorous guarantee of universal expressivity, addressing key theoretical gaps in QML. For practicality, we propose an Incremental MPE variant with layer-wise training to improve the trainability. Numerical experiments on clustered quantum states and quantum chemistry datasets validate MPE's efficacy in learning complex quantum data distributions.
\end{abstract}

\maketitle

\section{Introduction}
Recent advancements highlight the pivotal role of quantum machine learning (QML)~\citep{dunjko:2016:prl:qmlenhanced,biamonte:2017:QML} in processing quantum data derived from quantum systems~\citep{editorial:2023:qmladv:nature}. A fundamental task in QML is generating quantum data by learning the underlying distribution, essential for understanding quantum systems, synthesizing new samples, and advancing applications in quantum chemistry and materials science. However, extending classical generative approaches to quantum data presents significant challenges, as quantum distributions exhibit superposition, entanglement, and non-locality that classical models struggle to replicate efficiently. 

Quantum generative models such as quantum generative adversarial networks~\citep{lloyd:qugan:2018,zoufal:qugan:2019} and quantum variational autoencoders~\citep{khoshaman:qvae:2018,wu:qvae-mole:2024} can be used to prepare a fixed single quantum state~\citep{niu:equgan:2022,kim:hqugan:2024,tran:qvae:2024}, but are inefficient for generating ensembles of quantum states~\citep{beer:dqugan:2021} due to the need for training deep parameterized quantum circuits (PQCs). The quantum denoising diffusion probabilistic model~\citep{zhang:qddpm:prl:2024} offers a promising approach that employs intermediate training steps to smoothly interpolate between the target distribution and noise, thereby enabling efficient training.
However, the diffusion process requires high-fidelity scrambling random unitary circuits, demanding implementation challenges of precise spatio-temporal control. 

Learning quantum data distributions faces significant hurdles in the noisy intermediate-scale quantum (NISQ) era, including noise-induced errors, limited qubit connectivity, and optimization difficulties such as barren plateaus~\citep{clean:2018:natcom:barren}, where gradients vanish exponentially with system size. Moreover, achieving universality, which entails the model's ability to approximate any quantum distribution with arbitrary precision, remains a significant theoretical and practical challenge. These limitations underscore the need for innovative frameworks that combine theoretical guarantees of universality with scalable, noise-resilient training strategies.

In this study, we prove a universality theorem for learning quantum data distributions. Our proof relies on the Many-body Projected Ensemble (MPE) framework, a recent approach in quantum state design that uses a single many-body wave function to prepare random states~\citep{choi2023preparing-d8c,cotler2023emergent-4eb}. We demonstrate that MPE can approximate any $n$-qubit pure state distribution within a specified 1-Wasserstein distance error bound, leveraging covering discretization for the target quantum distribution and ancilla-assisted measurements to represent the discrete ensembles. 
This result is a cornerstone for the QML community, providing a rigorous theoretical guarantee of universal expressivity, enabling the modeling of complex quantum distributions in applications like quantum chemistry.
While practical QML methods face issues such as classical simulability, barren plateaus, and high resource requirements in training (Appendix~\ref{appx:QML}), we view the universality theorem as complementary, ensuring that models can, in principle, capture any distribution before optimizing for hardware.
To facilitate practical implementation, we introduce an Incremental MPE variant that employs layer-wise training to improve trainability and reduce resource demands, thereby enhancing compatibility with NISQ devices. Our numerical experiments on clustered quantum states and computational chemistry datasets validate the efficacy of the framework.

\section{Learning Quantum Data Distribution}
Generative models are powerful tools for generating samples from a target distribution and estimating the likelihood of given data points. We address the problem of learning an unknown quantum data distribution \(\gQ_t\) over \(n\)-qubit pure states, given a training dataset \(\mathcal{S} = \{ |\psi_0\rangle, \ldots, |\psi_{N-1}\rangle \}\) of \(N\) independent states sampled from \(\gQ_t\). The generative model is defined by a parameterized probability distribution \(\gQ_{\boldsymbol{\theta}}\) implemented via PQCs, where \(\boldsymbol{\theta}\) represents the trainable parameters (e.g., gate angles). The training objective is to optimize \(\boldsymbol{\theta}\) such that \(\gQ_{\boldsymbol{\theta}}\) closely approximates \(\gQ_t\), as measured by a distance metric \(\mathcal{D}(\gQ_{\boldsymbol{\theta}}, \gQ_t)\). Since directly computing \(\mathcal{D}(\gQ_{\boldsymbol{\theta}}, \gQ_t)\) is often infeasible, we sample a dataset \(\tilde{\mathcal{S}} = \{ |\tilde{\psi}_j\rangle \}_j\) from \(\gQ_{\boldsymbol{\theta}}\) and minimize the empirical distance \(\mathcal{D}(\mathcal{S}, \tilde{\mathcal{S}})\). In the inference phase, the optimized parameters \(\boldsymbol{\theta}_{\textup{opt}}\) are fixed, and new quantum states \(|\psi\rangle \sim \gQ_{\boldsymbol{\theta}_{\textup{opt}}}\) are generated for use in quantum simulation and data analysis.

\begin{figure}
\centerline{\includegraphics[width = 1.0\linewidth]{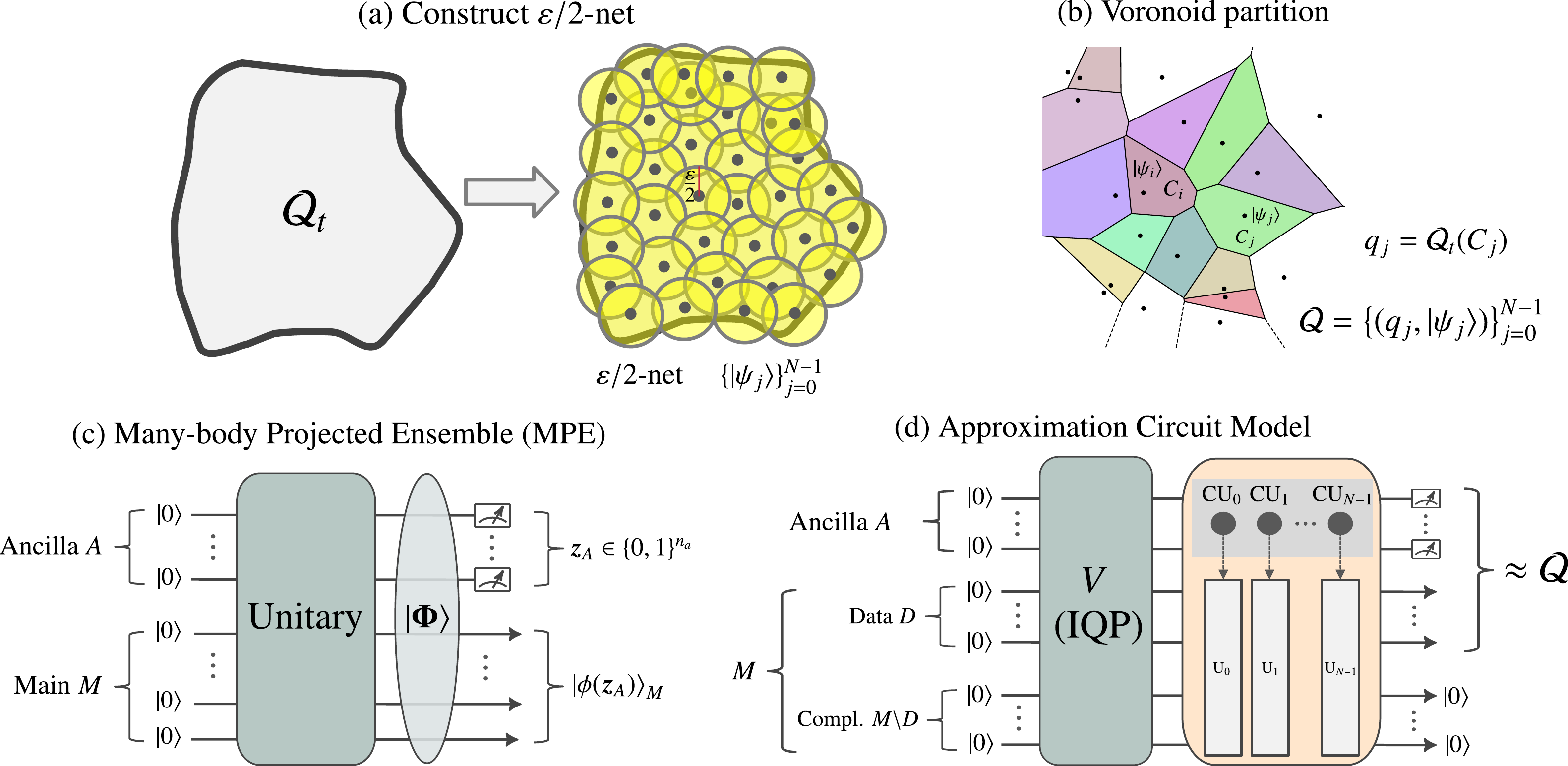}}
\caption{A scheme to construct a parameterized quantum distribution $\gQ_{\vtheta}$ approximating a target $\gQ_t$ within error $\varepsilon$. (a) Form an $\varepsilon/2$-net ensemble to approximate $\gQ_t$ within $\varepsilon/2$, with probabilities from (b) Voronoi partitioning. (c) Perform partial measurements of a single many-body wave function $\ket{\boldsymbol{\Phi}}$ on ancilla system $A$, yielding a projected ensemble in main system $M$. (d) Implement the approximation using an IQP circuit and controlled unitary circuits for the projected ensembles.}\label{fig:overview}
\end{figure}

\section{Preliminaries and Problem Formulation}

\subsection{Distance between distributions}
For a density operator $\rho$ acting on a Hilbert space (see Appendix~\ref{appx:qc}), we define the trace norm as
$\|\rho\|_1 = \Tr\sqrt{\rho^\dagger \rho},$
where $\Tr$ denotes the trace operation and $\rho^\dagger$ is the Hermitian conjugate of $\rho$. The trace distance between two density operators $\rho$ and $\sigma$ is 
$
d(\rho, \sigma) = \frac{1}{2} \|\rho - \sigma\|_1.
$
This metric captures the distinguishability of two quantum states and serves as a fundamental measure in quantum information theory.
To compare ensembles of quantum states, we employ the 1-Wasserstein distance, which extends the trace distance to probability distributions over density operators.

\begin{definition}[1-Wasserstein Distance]\label{def:wasserstein}
Let $P$ and $Q$ be two probability measures (or ensembles) over the space of density operators. The 1-Wasserstein distance between $P$ and $Q$ is defined as the minimal expected trace distance between pairs of states sampled from a coupling of $P$ and $Q$:
\begin{align}
W_1(P, Q) = \inf_{\pi \in \Pi(P, Q)} \mathbb{E}_{(\rho, \sigma) \sim \pi} \left[ \frac{1}{2} \|\rho - \sigma\|_1 \right],
\end{align}
where $\Pi(P, Q)$ denotes the set of couplings (joint probability measures) with marginals $P$ and $Q$.
\end{definition}

\subsection{Many-body Projected Ensemble (MPE)}

An ensemble of states can be generated from a single wave function by performing local measurements over a part of the total system.
We consider many-body system partitioned into a subsystem $M$ (with $n_m$ qubits) and its complement $A$ (with $n_a$ qubits). For the unification in this manuscript, we consider $A$ as the ancillary system [Fig.~\ref{fig:overview}(c)].
Given a generator state $\ket{\Phi}$, which is a pure many-body wave function on the total system $A+M$, we perform local measurements on $A$, typically in the computational basis. This yields different pure states $\ket{\phi(\vz_A)}_M$ on $M$, each corresponding to a distinct measurement outcome $\vz_A$ on $A$, which are bitstrings of the form, for example, $\vz_A=001\ldots010$. 
The collection of these states, together with probabilities $p(\vz_A)$, forms the many-body projected ensemble (MPE) on $M$:  $\left\{ (p(\vz_A), \ket{\phi(\vz_A)}_M )\right\}_{\vz_{A}}$.
The projected ensemble provides a full description of the total system state as
$
    \ket{\boldsymbol{\Phi}}_{A+M} = \sum_{\vz_A \in \{0,1\}^{n_a}} \sqrt{p(\vz_A)}  \ket{\vz_A}  \otimes \ket{\phi(\vz_A)}_M.
$
MPE is used to approximate a Haar-random state ensemble~\citep{choi2023preparing-d8c,cotler2023emergent-4eb}, providing insights into the study of complexity growth in quantum systems (Appendix~\ref{appx:learn:ensemble}).
In our study, MPE is used to prove the universality with the potential to yield an advantage in generative models, as classical methods struggle to prepare the many-body state.

\subsection{Problem Formulation}
Given a target distribution $\gQ_t$ over pure $n$-qubit states, our objective is to propose a class of parameterized probability distribution $\gQ_{\boldsymbol{\theta}}$, where $\vtheta$ denotes the parameters of the model, such that for any error $\varepsilon > 0$, there exists $\vtheta$ satisfying $W_1(\gQ_t, \gQ_{\boldsymbol{\theta}}) \leq \varepsilon$.
We assume $\gQ_t$ is unknown but samples from $\gQ_t$ are available for training. 
This problem is central to generative QML as it addresses the challenge of accurately approximating complex quantum data distributions with applications in quantum simulation, quantum chemistry, and quantum information processing.

\section{Main Result: Universality Approximation Theorem}

In approximating arbitrary $n$-qubit quantum data distributions, we propose a systematic method that combines discretization techniques with MPE. 
The procedure consists of the following steps.
\begin{enumerate}
    \item \textbf{Discretizing the target distribution with an \(\varepsilon/2\)-covering technique} [Fig.~\ref{fig:overview}(a)(b)]: This involves constructing a finite set of quantum states (an \(\varepsilon/2\)-net) such that every state in $\gQ_t$ is within a trace distance of at most \(\varepsilon/2\) from at least one state in the \(\varepsilon/2\)-net. This step yields a discrete distribution that approximates \(\gQ_t\) within an error bound of \(\varepsilon/2\).
    \item \textbf{Applying the MPE} [Fig.~\ref{fig:overview}(c)(d)]: This constructs \(\gQ_{\boldsymbol{\theta}}\) by leveraging partial measurements to approximate the discrete \(\varepsilon/2\)-net within an error bound of \(\varepsilon/2\).
\end{enumerate}

The use of \(\varepsilon\)-covering ensures computational feasibility for continuous space, while the MPE framework leverages the structure of many-body quantum systems to achieve high-fidelity approximations.
Here, we state the following universality theorem for the MPE framework:

\begin{theorem}[Universality Approximation of the Many-body Projected Ensemble]\label{thm:universality}
For any target quantum data distribution \(\gQ_t\) over pure $n$-qubit states, there exists a parameterized quantum distribution \(\gQ_{\boldsymbol{\theta}}\), formulated through the Many-body Projected Ensemble (MPE) framework utilizing a covering technique and ancilla-assisted measurements, such that for any error bound $\varepsilon > 0$, there exists a parameter $\vtheta^*$ for which the 1-Wasserstein distance satisfies
$
W_1(\gQ_t, \gQ_{\vtheta^*}) \leq \varepsilon.
$
\end{theorem}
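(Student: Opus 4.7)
I would follow the two-step blueprint of Fig.~\ref{fig:overview}: first discretize $\gQ_t$ into a finite ensemble via an $\varepsilon/2$-net on the pure-state manifold, then realize this finite ensemble exactly as the projected ensemble of a deliberately chosen many-body state, and finally chain the two approximation errors through the triangle inequality for the 1-Wasserstein distance (Definition~\ref{def:wasserstein}).

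\emph{Step 1 (covering).} The space of $n$-qubit pure states $\mathbb{CP}^{2^n-1}$ is compact in the trace-distance metric, so for every $\varepsilon>0$ it admits a finite $\varepsilon/2$-net $\{\ket{\chi_k}\}_{k=1}^{K}$ such that every pure state is within trace distance $\varepsilon/2$ of some $\ket{\chi_k}$. Partition the pure-state manifold into Voronoi cells $V_k$ relative to this net, let $p_k=\gQ_t(V_k)$, and define the discrete measure $\gQ_{\mathrm{net}}=\sum_{k=1}^K p_k\,\delta_{\ket{\chi_k}}$. The coupling sending each $\ket{\psi}\in V_k$ to its nearest net point $\ket{\chi_k}$ has pointwise trace distance at most $\varepsilon/2$ on its entire support, so $W_1(\gQ_t,\gQ_{\mathrm{net}})\le \varepsilon/2$ by the variational definition of $W_1$.

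\emph{Step 2 (MPE realization).} Allocate $n_a=\lceil\log_2 K\rceil$ ancilla qubits and take the generator state
\begin{equation}
\ket{\boldsymbol{\Phi}(\vtheta^*)}_{A+M} \;=\; \sum_{k=1}^K \sqrt{p_k}\,\ket{\vz_A^{(k)}}_A \otimes \ket{\chi_k}_M,
\end{equation}
where $\vz_A^{(k)}$ is the binary encoding of $k$ (the remaining $2^{n_a}-K$ computational strings carry zero amplitude). Matching this against the MPE identity stated in the preliminaries, computational-basis measurement on $A$ returns outcome $\vz_A^{(k)}$ with probability $p_k$ and collapses $M$ to $\ket{\chi_k}$, so the induced distribution $\gQ_{\vtheta^*}$ coincides with $\gQ_{\mathrm{net}}$ and $W_1(\gQ_{\mathrm{net}},\gQ_{\vtheta^*})=0$. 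The parameter $\vtheta^*$ is the gate-angle vector of the circuit in Fig.~\ref{fig:overview}(d): the IQP-type module loads the amplitudes $\sqrt{p_k}$ onto $A$, while the controlled-unitary cascade prepares $\ket{\chi_k}$ on $M$ conditioned on each ancilla branch; the existence of $\vtheta^*$ is guaranteed once the ansatz is deep enough to be universal on $n_a+n_m$ qubits. The triangle inequality
\begin{equation}
W_1(\gQ_t,\gQ_{\vtheta^*}) \;\le\; W_1(\gQ_t,\gQ_{\mathrm{net}}) + W_1(\gQ_{\mathrm{net}},\gQ_{\vtheta^*}) \;\le\; \tfrac{\varepsilon}{2}+0 \;\le\; \varepsilon
\end{equation}
then closes the argument.

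\emph{Main obstacle.} I expect the principal subtlety to sit in Step 1: rigorously lifting the $\varepsilon/2$-net property --- formulated on Hilbert-space vectors --- to the $W_1$ bound on density-operator measures, while handling measurability of the Voronoi cells under an arbitrary Borel $\gQ_t$ and the ambiguity of the nearest-point map on cell boundaries. Controlling cardinality $K$ as a function of $\varepsilon$ (needed for the ancilla count $n_a$) is routine via standard volume arguments on $\mathbb{CP}^{2^n-1}$, so it does not affect existence. Step 2 is essentially a constructive identity, and the only residual worry --- whether the specific IQP-plus-controlled-unitary factorization can load the required amplitudes $\{\sqrt{p_k}\}$ exactly --- can be bypassed by substituting any universal amplitude-preparation block for the IQP module, since the theorem asserts only existence, not minimal-depth, of $\vtheta^*$.
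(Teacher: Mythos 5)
Your proposal is correct and follows essentially the same route as the paper: an $\varepsilon/2$-net with Voronoi partitioning and an explicit coupling for the first half of the error budget, an MPE generator state carrying the cell probabilities for the second half, and the triangle inequality for $W_1$ to combine them. Your Step 2 corresponds to the paper's exact-realization option (Lemma~\ref{lem:exact_distribution}), which makes the second error term vanish and lets you skip the paper's additional lemma converting a total-variation bound on the ancilla outcome probabilities into a Wasserstein bound (Lemma~\ref{lem:mpe_wasserstein}); the paper's primary route instead uses an explicit IQP construction with $n_m = n_a + \lceil\log_2(1/\varepsilon)\rceil$ hidden qubits that only approximates the probabilities, but both variants are sanctioned by the paper's own proof of Theorem~\ref{thm:universality}.
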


\begin{proof}
The proof constructs \(\gQ_{\boldsymbol{\theta}}\) using the two steps above, supported by Lemmas~\ref{lem:epsilon_covering}, \ref{lem:approx_distribution}, and \ref{lem:mpe_wasserstein}. Lemma~\ref{lem:epsilon_covering} establishes a discrete ensemble \(\gQ = \{ (q_j, |\psi_j\rangle) \}_{j=0}^{N-1}\) such that \(W_1(\gQ_t, \gQ) \leq \varepsilon/2\). Next, the MPE framework (Lemma~\ref{lem:approx_distribution} or Lemma~\ref{lem:exact_distribution}) constructs a class of projected ensemble \(\gP = \{ (p_j, |\psi_j\rangle) \}_{j=0}^{N-1}\) identified as \(\gQ_{\boldsymbol{\theta}}\) such that there exist a parameter $\vtheta^*$ for which \(W_1(\gQ_{\vtheta^*}, \gQ) \leq \varepsilon/2\). By the triangle inequality, we have
$
W_1(\gQ_t, \gQ_{\vtheta^*})   \leq W_1(\gQ_t, \gQ) + W_1(\gQ, \gQ_{\vtheta^*}) \leq \frac{\varepsilon}{2} + \frac{\varepsilon}{2} = \varepsilon.
$
The scaling of $N$ and the number of ancilla qubits are detailed in the lemmas below.
\end{proof}

\subsection{\(\varepsilon/2\)-Covering for Quantum State Distribution Discretization}
To enable the approximation of a quantum data distribution ensuring the usage of the MPE framework, we first introduce a key lemma that establishes the existence of a finite ensemble of pure quantum states approximating a target distribution within a 1-Wasserstein distance of \(\varepsilon/2\). 

\begin{lemma}[Finite Ensemble Approximation]\label{lem:epsilon_covering}
For any target quantum data distribution \(\gQ_t\) over pure \(n\)-qubit states and any \(\varepsilon > 0\), there exists a finite ensemble \(\gQ = \{ (q_j, |\psi_j\rangle) \}_{j=0}^{N-1}\) such that the 1-Wasserstein distance satisfies
$W_1(\gQ_t, \gQ) \leq \frac{\varepsilon}{2}.$
\end{lemma}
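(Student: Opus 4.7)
The plan is to exploit the compactness of the set of pure $n$-qubit states under the trace distance and to construct the desired finite ensemble via a Voronoi partition of that set, with weights pulled back from $\gQ_t$. Concretely, the space of pure $n$-qubit states, viewed modulo global phase, is the complex projective space $\mathbb{CP}^{2^n-1}$, which is compact. The trace distance $d(|\psi\rangle\!\langle\psi|, |\phi\rangle\!\langle\phi|) = \sqrt{1-|\langle\psi|\phi\rangle|^2}$ induces a metric topology equivalent to the Fubini--Study metric, so standard results on metric entropy of compact manifolds guarantee, for any $\varepsilon>0$, the existence of a finite $\varepsilon/2$-net $\{|\psi_j\rangle\}_{j=0}^{N-1}$ with $N = O\bigl((1/\varepsilon)^{2(2^n-1)}\bigr)$.

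Given such a net, the first step I would carry out is to define measurable Voronoi cells $V_j = \{|\psi\rangle : d(|\psi\rangle\!\langle\psi|, |\psi_j\rangle\!\langle\psi_j|) \leq d(|\psi\rangle\!\langle\psi|, |\psi_k\rangle\!\langle\psi_k|)\text{ for all }k\}$, breaking ties by index so the cells partition the state space. By construction, every state in $V_j$ is within trace distance $\varepsilon/2$ of $|\psi_j\rangle$. Next, I would set $q_j := \gQ_t(V_j)$, which gives a valid probability distribution on the net since the cells cover the space and are disjoint after tie-breaking. The resulting discrete ensemble is $\gQ = \{(q_j, |\psi_j\rangle)\}_{j=0}^{N-1}$, as depicted in panels (a)(b) of Fig.~\ref{fig:overview}.

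The final step is bounding $W_1(\gQ_t, \gQ)$. I would define the coupling $\pi$ that pushes every $|\psi\rangle \in V_j$ (with probability mass $d\gQ_t(\psi)$) to the Dirac mass on $|\psi_j\rangle$; its marginals are exactly $\gQ_t$ and $\gQ$. Plugging this into Definition~\ref{def:wasserstein} yields
\begin{align}
W_1(\gQ_t, \gQ) \leq \sum_{j=0}^{N-1} \int_{V_j} \tfrac{1}{2}\bigl\||\psi\rangle\!\langle\psi| - |\psi_j\rangle\!\langle\psi_j|\bigr\|_1\, d\gQ_t(\psi) \leq \tfrac{\varepsilon}{2},
\end{align}
where the last inequality uses the covering property $d(\rho,\rho_j)\leq \varepsilon/2$ inside each $V_j$.

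The main obstacle I anticipate is measure-theoretic rather than combinatorial: one must verify that $\gQ_t$ is a Borel measure on $\mathbb{CP}^{2^n-1}$ and that the Voronoi cells are Borel (so that $\gQ_t(V_j)$ is well defined and the coupling $\pi$ is a genuine probability measure). The cells are finite intersections of sublevel sets of continuous functions, so Borel-measurability is immediate; the assumption that $\gQ_t$ is a probability measure on pure states is standard in QML, and I would state it explicitly. A minor secondary issue is the explicit scaling of $N$ with $n$ and $\varepsilon$, which I would record from the Fubini--Study volume bound so that it feeds into the ancilla-count accounting used later in Lemmas~\ref{lem:approx_distribution} and \ref{lem:mpe_wasserstein}.
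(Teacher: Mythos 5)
Your proposal is correct and follows essentially the same route as the paper: a finite $\varepsilon/2$-net by compactness, Voronoi cells with weights $q_j=\gQ_t(C_j)$, and the explicit ``snap-to-nearest-center'' coupling giving $W_1 \leq \varepsilon/2$. The only differences are cosmetic refinements (explicit tie-breaking and Borel-measurability remarks, which the paper leaves implicit) and that you quote the covering number for the full $\mathbb{CP}^{2^n-1}$, whereas the paper tightens it to $5\,D\ln(D)\,(2/\varepsilon)^{2(D-1)}$ when $\gQ_t$ lives in a $D$-dimensional subspace.
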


\begin{proof}
Let \(\delta = \varepsilon/2\). The trace distance between two pure quantum states \(|\psi\rangle\) and \(|\phi\rangle\) is defined as
$
d(|\psi\rangle, |\phi\rangle) = \frac{1}{2} \||\psi\rangle\langle\psi| - |\phi\rangle\langle\phi|\|_1.
$
We follow the steps below for our proof.
\begin{enumerate}
    \item  \textbf{Existence of a finite \(\delta\)-net} [Fig.~\ref{fig:overview}(a)]: 
    By compactness, there exists a finite set of pure states \(\{ |\psi_j\rangle \}_{j=0}^{N-1}\) (a \(\delta\)-net) such that for every \(|\psi\rangle \sim \gQ_t\) there is some \(|\psi_j\rangle\) satisfying
    $
    d(|\psi\rangle, |\psi_j\rangle) \leq \delta.
    $
    This \(\delta\)-net forms the basis for discretizing the state space.
    Here, $N$ is the $\delta$-covering number of $(\gQ_t, d)$, which is the cardinality $\gN(\gQ_t, d, \delta)$ of the smallest $\delta$-net of $\gQ_t$.
    If we consider $\gQ_t$ is the $D$-dimensional subspace of the full Hilbert space $\mathbb{C}^{2^n}$ ($D\geq 2$), standard results from high-dimensional geometry and geometry of quantum states provide an upper bound for $\gN(\gQ_t, d, \delta)$ as follows (see Appendix~\ref{apx:covering:number} for the construction of $\delta$-net and the formal proof for the bound of covering number):
    \begin{align}\label{eqn:cover:qt:bound}
    N = \gN(\gQ_t, d, \delta) \leq 5 \cdot D \ln(D) \cdot (1/\delta)^{2(D - 1)}.
     \end{align}

    \item \textbf{Voronoi partition} [Fig.~\ref{fig:overview}(b)]: We define measurable cells \(\{ C_j \}_{j=0}^{N-1}\) that partition the space of pure states as
    $
    C_j = \left\{ |\psi\rangle : d(|\psi\rangle, |\psi_j\rangle) \leq d(|\psi\rangle, |\psi_i\rangle) \text{ for all } i \right\},    
    $
    where \(|\psi_j\rangle\) is the center of \(C_j\). By the property of the \(\delta\)-net, for every \(|\psi\rangle \in C_j\) we have
    $
    d(|\psi\rangle, |\psi_j\rangle) \leq \delta.
    $
    These cells form a Voronoi partition assigning each state to the nearest center in the \(\delta\)-net.
    We define \(q_j = \gQ_t(C_j)\), which is the probability that \(\gQ_t\) assigns to all pure states in cell \(C_j\). The finite ensemble is defined as
    $
    \gQ = \{ (q_j, |\psi_i\rangle) \}_{j=0}^{N-1},
    $
    representing a discrete approximation of \(\gQ_t\) with each \(|\psi_j\rangle\) weighted by the probability mass of its corresponding cell.

    \item \textbf{Bounding the 1-Wasserstein Distance}: We construct an explicit coupling \(\pi \in \Pi(\gQ_t, \gQ)\). For each state \(|\psi\rangle \sim \gQ_t\), identify the unique index \(j\) such that \(|\psi\rangle \in C_j\) and pair the density operator \(|\psi\rangle\langle\psi|\) with \(|\psi_j\rangle\langle\psi_j|\). The 1-Wasserstein distance is then bounded as
    \begin{align}\label{eqn:total:bound}
        W_1(\gQ_t, \gQ)  \leq \mathbb{E}_{|\psi\rangle \sim \gQ_t} \left[ d(|\psi\rangle, |\psi_j\rangle) \right] \leq \delta = \frac{\varepsilon}{2}.
    \end{align}
\end{enumerate}

\Eqref{eqn:total:bound} completes the proof, where 
the second inequality holds because for each \(|\psi\rangle \in C_j\) the trace distance to the center \(|\psi_j\rangle\) is less than \(\delta\) by the property of the \(\delta\)-net and Voronoi partition.
\end{proof}

\subsection{Applying the Many-body Projected Ensemble (MPE)}

In our setting, the target distribution $\gQ_t$ is unknown, but samples from $\gQ_t$ are available for training. The \(\varepsilon/2\)-covering technique  provides the explicit construction of a $\delta$-net $\{ |\psi_j\rangle \}_{j=0}^{N-1}$ (Appendix~\ref{apx:covering:number}), though the probabilities $\{q_j\}_{j=0}^{N-1}$ remain unknown. 
The next step constructs a class of PQCs to produce a projected ensemble \(\gP = \{ (p_j, |\psi_j\rangle) \}_{j=0}^{N-1}\) such that the probability distribution \(p = \{ p_j \}_{j}\) approximating the target distribution \(q = \{ q_j \}_{j}\).
This step leverages a composite quantum system comprising an ancilla system \(A\) with \(n_a = \lceil \log_2 N \rceil\) qubits (ensuring \(N \leq 2^{n_a}\)), and a hidden system \(M\) with \(n_m\) qubits. 
The following lemmas establish the existence and construction of \(V\) acting on \(A \otimes M\) that generates \(\{ p_j \}_{j} \approx \{ q_j \}_{j}\).
We use $\vj \in \{0,1\}^{n_a}$ to denote the measurement outcome (binary string), and $j\in \{0,\ldots,2^{n_a}-1\}$ is the decimal equivalent of  binary string $\vj$.

\begin{lemma}[Approximate Probability Distribution]\label{lem:approx_distribution}
Given a target probability distribution \(q = \{ q_j \}_{j=0,1,\ldots,2^{n_a}-1}\) and an error bound \(\varepsilon > 0\), there exists a parameterized unitary \(V\) acting on the ancilla system \(A\) (with \(n_a = \lceil \log_2 N \rceil\) qubits) and a hidden system \(M\) (with \(n_m = n_a + \lceil \log_2 (1/\varepsilon) \rceil\) qubits) such that after applying \(V\) and measuring \(A\) in the computational basis, the resulting probability distribution \(p = \{ p_j \}_{j=0,1,\ldots,2^{n_a}-1}\) satisfies
$\delta(p, q) \leq \frac{\varepsilon}{2},$
where the total variation distance is defined as \(\delta(p, q) = \frac{1}{2} \sum_j |p_j - q_j|\). 
Here, $p_j=p(\vj)$ is the probability to obtain the measure outcome $\vj \in \{0,1\}^{n_a}$ in $A$.
Furthermore, an explicit construction of \(V\) is provided, extending the result of Lemma 1 in \cite{kurkin:2025:note:universality}.
\end{lemma}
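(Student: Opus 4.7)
The plan is to reduce the problem to two sub-problems: (i) discretize the target probabilities $q$ to dyadic rationals with denominator $2^{n_m}$, and (ii) realize that discrete distribution exactly as the computational-basis measurement statistics on $A$ of an explicitly constructed bipartite pure state $V\ket{0}_A\ket{0}_M$. The extra $n_m - n_a = \lceil \log_2(1/\varepsilon)\rceil$ qubits in $M$ play the role of precision qubits supplied by a Hadamard layer.

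For step (i), I would round each $q_j$ to a dyadic rational $\tilde{q}_j = k_j/2^{n_m}$ with nonnegative integers $k_j$ summing to $2^{n_m}$ and $|q_j - \tilde{q}_j|\le 2^{-n_m}$. This is achieved by setting $k_j = \lfloor 2^{n_m} q_j\rfloor$ and distributing the residual $2^{n_m} - \sum_j \lfloor 2^{n_m} q_j\rfloor < N$ one unit at a time to the indices with the largest fractional parts. Using $N \le 2^{n_a}$ and $n_m = n_a + \lceil \log_2(1/\varepsilon)\rceil$, the total variation satisfies
\[
\delta(q,\tilde{q}) \;\le\; \frac{N}{2}\,2^{-n_m} \;\le\; 2^{\,n_a - n_m - 1} \;\le\; \frac{\varepsilon}{2},
\]
so it suffices to construct $V$ whose induced distribution $p$ equals $\tilde{q}$ exactly.

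For step (ii), I would pick any function $f:\{0,1\}^{n_m}\to\{0,1\}^{n_a}$ with $|f^{-1}(\vj)|=k_j$ for each ancilla outcome $\vj$ (e.g., by lex-ordering the inputs in $M$ and allocating the first $k_0$ to $\vj=0^{n_a}$, the next $k_1$ to the next outcome, and so on), and set $V = U_f\,(I_A \otimes H^{\otimes n_m})$, where $H^{\otimes n_m}$ creates the uniform superposition on $M$ and $U_f$ is the classical permutation $\ket{0}_A\ket{\vl}_M \mapsto \ket{f(\vl)}_A\ket{\vl}_M$, decomposable into multi-controlled $X$ gates indexed by the truth table of $f$. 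The resulting state
\[
\ket{\Psi}_{AM} \;=\; \frac{1}{\sqrt{2^{n_m}}}\sum_{\vl\in\{0,1\}^{n_m}} \ket{f(\vl)}_A\otimes\ket{\vl}_M
\]
has reduced state $\rho_A = \sum_\vj \tilde{q}_j \ket{\vj}\!\bra{\vj}$, so a computational-basis measurement on $A$ yields outcome $\vj$ with probability exactly $\tilde{q}_j$; combining with step (i) gives $\delta(p,q)\le \varepsilon/2$. The parameterization $\vtheta$ is absorbed into the control strings of the multi-controlled $X$ gates, lifting the single-distribution construction of Lemma~1 of \cite{kurkin:2025:note:universality} to arbitrary dyadic distributions on $n_a$ bits.

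The main obstacle I anticipate is the combinatorial bookkeeping in step (i) together with ensuring the labeling $f$ remains realizable: one must check that the floor-plus-residual rounding keeps $|k_j/2^{n_m} - q_j|\le 2^{-n_m}$ after enforcing $\sum_j k_j = 2^{n_m}$, and that the chosen $f$ admits a clean oracle decomposition. The rounding bound is immediate once one notes that each $k_j$ is adjusted by at most one unit, and the existence of $f$ is automatic because $\sum_j k_j = 2^{n_m}$; any bit-addressed lookup decomposes $U_f$ into $O(2^{n_m})$ multi-controlled $X$ gates, which matches the scaling of the ancilla-assisted construction asserted in the lemma. Since only existence with an explicit construction is claimed, I would not optimize this gate count further.
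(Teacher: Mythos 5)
Your proof is correct, and the mathematical core coincides with the paper's: both round $q$ to dyadic rationals $c_\vb/2^{n_m}$ via floor-plus-remainder (giving $|q_\vb - p_\vb|\le 2^{-n_m}$ and hence $\delta(p,q)\le 2^{n_a-n_m-1}\le\varepsilon/2$), and both realize that distribution through a counting map $f$ (the paper's $v$) from the $2^{n_m}$ hidden basis states to ancilla outcomes, preparing exactly the state $\tfrac{1}{\sqrt{2^{n_m}}}\sum_{\vl}\ket{f(\vl)}_A\ket{\vl}_M$. Where you diverge is the circuit realization of that state. You use $V=U_f\,(I_A\otimes H^{\otimes n_m})$ with $U_f$ a classical XOR-oracle built from multi-controlled $X$ gates, which is more elementary and makes the exactness of $p=\tilde q$ transparent via $\rho_A$. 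The paper instead writes the same state as the output of an Instantaneous Quantum Polynomial circuit $V=(H^{\otimes n_a}\otimes I)\,D(\vth)\,H^{\otimes(n_a+n_m)}$, observing that $(H^{\otimes n_a}\otimes I)\ket{\psi'}$ is a uniform-amplitude state with phases $(-1)^{v(\vk)\cdot\vj}$, so the diagonal layer needs only angles in $\{0,\pi\}$. This buys membership in the parameterized IQP-with-hidden-units ansatz of Kurkin et al.\ that the lemma explicitly claims to extend and that the rest of the framework [Fig.~1(d)] instantiates as $\gQ_{\vtheta}$; your oracle construction proves the existence statement but does not land in that ansatz class, so if you intend the "extending Lemma 1" clause you would still need to observe (as the paper does) that your target state is itself an IQP output. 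One minor point to make explicit: $U_f$ as written is only specified on the $\ket{0}_A$ subspace, so you should state the standard XOR extension $\ket{\mathbf{a}}_A\ket{\vl}_M\mapsto\ket{\mathbf{a}\oplus f(\vl)}_A\ket{\vl}_M$ to make it a genuine permutation unitary.
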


\begin{proof}[Proof Sketch]
The proof extends Lemma 1 in \cite{kurkin:2025:note:universality}, which establishes the existence of an Instantaneous Quantum Polynomial (IQP)~\citep{shepherd:2009:iqp,bremner:2010:iqp} circuit \(V\) satisfying \(\delta(p, q) \leq \varepsilon/2\) with \(n_m = n_a + \lceil \log_2 (1/\varepsilon) \rceil\). We propose a specific implementation of \(V\) detailed in Appendix~\ref{lem:approx_distribution}. 
\end{proof}

As detailed in Appendix~\ref{lem:approx_distribution}, Lemma 4.3 constructs an approximate $p\approx q$ using the IQP circuits with total $2^{n_a + n_m}$ parameters, but restricted to only two real values $0$ and $\pi$. The following lemma ( Lemma 5 in \cite{kurkin:2025:note:universality}) achieves exact $p=q$ with the same total parameters, but the parameters are full complex values. The idea is to decompose $q$ into mixtures of 2-sparse distributions with $n_m = n_a+1$ hidden qubits and complex phases. The construction is independent of error, as it is exact, but complex phases increase parameter expressivity and training costs.

\begin{lemma}[Exact Probability Distribution]\label{lem:exact_distribution}
There exists a parameterized unitary \(V\) acting on the ancilla system \(A\) (with \(n_a = \lceil \log_2 N \rceil\) qubits) and a hidden system \(M\) (with sufficiently many qubits \(n_m\)) such that after applying \(V\) and measuring \(A\) in the computational basis, the resulting probability distribution \(p = \{ p_j \}_{j=0,1,\ldots,2^{n_a}-1}\) exactly matches the target distribution, i.e., \(p = q\).
\end{lemma}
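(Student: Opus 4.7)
The plan is to construct a parameterized unitary $V$ on $A \otimes M$ such that the prepared state $|\Phi\rangle_{AM} = V|0\rangle_{AM}$ admits the Schmidt-like decomposition
\begin{equation*}
|\Phi\rangle_{AM} = \sum_{j=0}^{2^{n_a}-1} \sqrt{q_j}\,|j\rangle_A \otimes |\phi_j\rangle_M
\end{equation*}
for some normalized states $|\phi_j\rangle_M$. By the Born rule this immediately yields $p(\vj) = \|\langle j|_A|\Phi\rangle\|^2 = q_j$ exactly, regardless of the choice of the $|\phi_j\rangle_M$. So the existence claim reduces to exhibiting such a $V$ with the advertised parameter count.

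Following the hint preceding the lemma, I would first decompose the target as a convex mixture of at most $N = 2^{n_a}$ two-point (``2-sparse'') distributions, $q_j = \sum_{k=0}^{N-1} w_k\,\pi^{(k)}_j$, where each $\pi^{(k)}$ is supported on a pair of indices $\{a_k, b_k\}$; this is always achievable, for instance by an alias-style construction on the histogram of $q$. I would then take $n_m = n_a + 1$ and partition $M$ into an $n_a$-qubit register $M_1$ (holding the component index $k$) together with a single qubit $M_0$ (the pair selector between $a_k$ and $b_k$).

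Defining angles via $\cos^2(\theta_k/2) = \pi^{(k)}_{a_k}$, the target is realized coherently as
\begin{equation*}
|\Phi\rangle = \sum_{k=0}^{N-1} \sqrt{w_k}\,e^{i\chi_k}\,|k\rangle_{M_1}\otimes\Bigl(\cos\tfrac{\theta_k}{2}\,|a_k\rangle_A|0\rangle_{M_0} + e^{i\varphi_k}\sin\tfrac{\theta_k}{2}\,|b_k\rangle_A|1\rangle_{M_0}\Bigr).
\end{equation*}
Orthogonality of $\{|k\rangle_{M_1}\}$ then gives $p(\vj) = \sum_k w_k\,\pi^{(k)}_j = q_j$ exactly, independent of the free phases $\{\chi_k,\varphi_k\}$, which contribute the genuine complex-valued degrees of freedom promised in the lemma. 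I would realize $V$ as four consecutive parameterized layers: a state-preparation on $M_1$ loading $\sum_k \sqrt{w_k}\,e^{i\chi_k}|k\rangle$; an $M_1$-controlled index-loader mapping $|k\rangle|0\rangle_A \mapsto |k\rangle|a_k\rangle_A$; an $M_1$-controlled single-qubit rotation combining $R_y(\theta_k)$ with relative phase $e^{i\varphi_k}$ on $M_0$; and finally an $M_0$-controlled $X$-string on $A$ implementing $|a_k\rangle \leftrightarrow |b_k\rangle$. Altogether this exposes $2^{n_a + n_m}$ complex parameters, matching the count quoted just before the lemma.

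The main obstacle will be keeping the topology of the two controlled-permutation layers independent of $q$, so that the 2-sparse decomposition is absorbed entirely into the angle and phase parameters rather than into the circuit structure itself. I would handle this by committing to a canonical pairing at the outset, for example $(a_k, b_k) = (2k, 2k+1)$, which reduces those layers to a fixed set of multiplexed CNOTs. Since every step in the construction is an exact equality rather than an $\varepsilon$-dependent truncation, the resulting $V$ achieves $p = q$ as claimed, and the construction is genuinely independent of any error bound.
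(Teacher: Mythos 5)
Your construction is correct and follows essentially the same route the paper takes: the paper defers to Lemma 5 of the cited note and describes exactly this idea — decompose $q$ into a mixture of 2-sparse distributions realized coherently with $n_m = n_a + 1$ hidden qubits and complex phases, so that the Born rule gives $p = q$ exactly. Your write-up merely makes the coupling explicit (pair-selector qubit plus component register), which is a faithful instantiation of the paper's sketch.
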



After applying $V$ on the composite system \(A \otimes M\) initialized in the state \(|\boldsymbol{0}\rangle_A |\boldsymbol{0}\rangle_M\), we obtain a projected ensemble \( \{ (p(\vj), |\phi(\vj)\rangle_M) \}_{\vj\in \{0,1\}^{n_a}}\), where \(|\phi(\vj)\rangle_M\) is the state in the hidden system \(M\) corresponding with the measurement outcome $\vj$. This process is summarized as follows:
    \begin{align}
    |0\rangle_A |0\rangle_M \xrightarrow{V}  \sum_{\vj \in \{0,1\}^{n_a}} \sqrt{p(\vj)}   |\vj\rangle_A \otimes |\phi(\vj)\rangle_M.
    \end{align}
We select $M$ such that the number $n_m$ of qubits in $M$ is larger than the number $n_d$ of data qubits.
Then $M$ can be divided into the data system $D$ and the complementary system $M\backslash D$.

Next, we apply a series of multi-qubits controlled unitaries $\textup{CU}_0, \textup{CU}_1, \ldots, $ such that if the measurement outcome in $A$ is $\vj$, the unitary $U_j$ will transform the state $ |\phi(\vj)\rangle_M$ to $\ket{\psi_j}_D\otimes\ket{\boldsymbol{0}}_{M\backslash D}$ [Fig.~\ref{fig:overview}(d)].
This operation is described as:
    \begin{align}\label{eqn:control}
    \sum_{\vj \in \{0,1\}^{n_a}} \sqrt{p(\vj)}   |\vj\rangle_A \otimes |\phi(\vj)\rangle_M \xrightarrow{\prod_{j=0}^{2^{n_a}-1}\textup{CU}_j} \sum_{j=0}^{2^{n_a}-1} \sqrt{p_j} |j\rangle_A \otimes \ket{\psi_j}_D\otimes\ket{\boldsymbol{0}}_{M\backslash D} ,
    \end{align}
    where \(p_j = p(\vj)\). The resulting ensemble obtained from the data system $D$ is \(\gP = \{ (p_j, |\psi_j\rangle) \}_{j=0}^{N-1}\), designed to approximate \(\gQ = \{ (q_j, |\psi_j\rangle) \}_{j=0}^{N-1}\) via the following lemma.
    
\begin{lemma}[Wasserstein Distance Bound for Projected Ensemble]\label{lem:mpe_wasserstein}
The projected ensemble \(\gP = \{ (p_j, |\psi_j\rangle) \}_{j=0}^{N-1}\) constructed via the MPE framework (using Lemma~\ref{lem:approx_distribution} or Lemma~\ref{lem:exact_distribution}, and~\Eqref{eqn:control}) and the discrete ensemble \(\gQ = \{ (q_j, |\psi_j\rangle) \}_{j=0}^{N-1}\) from Lemma~\ref{lem:epsilon_covering} satisfy
$W_1(\gP, \gQ) \leq \frac{\varepsilon}{2}.$
\end{lemma}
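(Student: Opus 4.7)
The plan is to exploit the key structural fact that $\gP$ and $\gQ$ share the same set of pure-state atoms $\{|\psi_j\rangle\}_{j=0}^{N-1}$ and differ only in the probability weights assigned to them. Under this setup, the 1-Wasserstein distance in Definition~\ref{def:wasserstein} reduces to a finite discrete optimal-transport problem indexed by $j,k \in \{0,\ldots,N-1\}$, for which the total variation distance on the index set furnishes a clean upper bound via Lemma~\ref{lem:approx_distribution} (or trivially via Lemma~\ref{lem:exact_distribution} when $p=q$).

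First, I would write down an explicit feasible coupling $\pi \in \Pi(\gP, \gQ)$ using the standard match-the-minimum construction. Let $m_j = \min(p_j, q_j)$ and place mass $m_j$ on the diagonal pair $(|\psi_j\rangle\langle\psi_j|, |\psi_j\rangle\langle\psi_j|)$. Define residuals $p_j^+ = p_j - m_j$ and $q_j^+ = q_j - m_j$; by elementary identities $\sum_j p_j^+ = \sum_j q_j^+ = \delta(p,q)$, and at least one of $p_j^+, q_j^+$ vanishes for every $j$. I would then distribute the residual via the product coupling $\pi^+(j,k) = p_j^+ q_k^+ / \delta(p,q)$ (with the trivial convention when $\delta(p,q)=0$). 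A direct check shows the marginals of $\pi = \mathrm{diag}(m_j) + \pi^+$ are $p$ and $q$ respectively, so $\pi$ is admissible.

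Second, I would compute the expected trace distance under $\pi$. The diagonal part contributes zero since $d(|\psi_j\rangle\langle\psi_j|, |\psi_j\rangle\langle\psi_j|) = 0$. The residual part satisfies
\begin{equation}
\sum_{j,k} \pi^+(j,k)\, d(|\psi_j\rangle\langle\psi_j|, |\psi_k\rangle\langle\psi_k|) \leq \delta(p,q) \cdot \max_{j,k} d(|\psi_j\rangle\langle\psi_j|, |\psi_k\rangle\langle\psi_k|) \leq \delta(p,q),
\end{equation}
where the final inequality uses the fact that the trace distance between any two pure states lies in $[0,1]$. Chaining this with Lemma~\ref{lem:approx_distribution} gives $W_1(\gP, \gQ) \leq \delta(p,q) \leq \varepsilon/2$.

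I do not anticipate a serious obstacle. The only subtlety worth verifying carefully is that the infimum over couplings in Definition~\ref{def:wasserstein}, which a priori ranges over probability measures on a continuous space of density operators, is upper bounded by the discrete coupling above: this holds because any coupling of two atomic measures is itself supported on the product of their atoms, so the finite matrix $\pi$ constructed above is an admissible element of $\Pi(\gP,\gQ)$ and its expected cost upper bounds $W_1$. The construction is uniform in the choice of ensemble in Lemma~\ref{lem:approx_distribution}, so it immediately yields the $\varepsilon/2$ bound required to complete the triangle-inequality step in the proof of Theorem~\ref{thm:universality}.
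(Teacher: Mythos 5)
Your proposal is correct and follows essentially the same route as the paper's proof: both construct the coupling with diagonal mass $\min(p_j,q_j)$, bound the off-diagonal mass by the total variation distance $\delta(p,q)$, and use the fact that the trace distance between pure states is at most $1$ to conclude $W_1(\gP,\gQ)\leq\delta(p,q)\leq\varepsilon/2$. Your version is marginally more explicit in specifying the residual transport as a normalized product coupling, where the paper leaves the off-diagonal distribution implicit, but the argument is the same.
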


\begin{proof}
By Lemma~\ref{lem:approx_distribution}, the projected ensemble \(\gP = \{ (p_j, |\psi_j\rangle) \}_{j=0}^{N-1}\) satisfies the total variation distance bound
$\delta(p, q) = \frac{1}{2} \sum_{j=0}^{N-1} |p_j - q_j| \leq \frac{\varepsilon}{2},$
where \(p = \{ p_j \}_{j=0}^{N-1}\) and \(q = \{ q_j \}_{j=0}^{N-1}\) are the probability distributions of \(\gP\) and \(\gQ\), respectively. 
The 1-Wasserstein distance is defined as
\begin{align}
W_1(\gP, \gQ) = \inf_{\pi \in \Pi(\gP, \gQ)} \sum_{j=0}^{N-1} \sum_{k=0}^{N-1} \pi_{jk} d(|\psi_j\rangle, |\psi_k\rangle),    
\end{align}
where \(\Pi(\gP, \gQ)\) is the set of couplings $\pi$. Here, \(\pi = (\pi_{jk})_{j,k=0}^{N-1}\) is a non-negative matrix  satisfying
$\sum_{j=0}^{N-1} \pi_{jk} = q_k, \quad \sum_{k=0}^{N-1} \pi_{jk} = p_j, \quad \pi_{jk} \geq 0,
$
and 
$d(|\psi_j\rangle, |\psi_k\rangle) = \frac{1}{2} \||\psi_j\rangle\langle\psi_j| - |\psi_k\rangle\langle\psi_k|\|_1 = \sqrt{1 - |\langle \psi_j | \psi_k \rangle|^2} \leq 1,$
with \(d(|\psi_j\rangle, |\psi_j\rangle) = 0\).

To bound \(W_1(\gP, \gQ)\), we construct an explicit coupling \(\pi^* \in \Pi(\gP, \gQ)\) with diagonal terms \(\pi^*_{jj} = \min(p_j, q_j)\) and off-diagonal terms \(\pi^*_{jk}\) (for \(j \neq k\)) distributing the remaining probability mass to satisfy the marginal constraints. The distance is then bounded as
$
W_1(\gP, \gQ) \leq \sum_{j=0}^{N-1} \sum_{k=0}^{N-1} \pi^*_{jk} d(|\psi_j\rangle, |\psi_k\rangle).
$
Since \(d(|\psi_j\rangle, |\psi_j\rangle) = 0\), only the off-diagonal terms contribute:
$
W_1(\gP, \gQ) \leq \sum_{j \neq k} \pi^*_{jk} d(|\psi_j\rangle, |\psi_k\rangle) \leq \sum_{j \neq k} \pi^*_{jk},
$
because \(d(|\psi_j\rangle, |\psi_k\rangle) \leq 1\). The sum of the off-diagonal terms is
$
\sum_{j \neq k} \pi^*_{jk} = \sum_{j=0}^{N-1} \sum_{k=0}^{N-1} \pi^*_{jk} - \sum_{j=0}^{N-1} \pi^*_{jj} = 1 - \sum_{j=0}^{N-1} \min(p_j, q_j) = \frac{1}{2} \sum_{j=0}^{N-1} |p_j - q_j| = \delta(p,q),
$
since \(\sum_{j,k} \pi^*_{jk} = \sum_j p_j = \sum_k q_k = 1\) and we use 
$
\sum_{j=0}^{N-1} \min(p_j, q_j) = \sum_{j=0}^{N-1} \left( \frac{p_j + q_j - |p_j - q_j|}{2} \right) = 1 - \frac{1}{2} \sum_{j=0}^{N-1} |p_j - q_j|.
$
Thus,
$
W_1(\gP, \gQ) \leq \delta(p, q) \leq \frac{\varepsilon}{2},
$
where the final inequality follows from Lemma~\ref{lem:approx_distribution}. 

For Lemma~\ref{lem:exact_distribution}, where \(p = q\), we have \(W_1(\gP, \gQ) = 0\).
This completes the proof.
\end{proof}

Combining Lemmas~\ref{lem:epsilon_covering} and \ref{lem:mpe_wasserstein}, we bound the 1-Wasserstein distance between \(\gQ_t\) and the projected ensemble \(\gP\) as
$
W_1(\gQ_t, \gP) \leq W_1(\gQ_t, \gQ) + W_1(\gQ, \gP) \leq \frac{\varepsilon}{2} + \frac{\varepsilon}{2} = \varepsilon.
$
Using Lemma~\ref{lem:exact_distribution}, the bound improves to:
$
W_1(\gQ_t, \gP) \leq \frac{\varepsilon}{2} + 0 = \frac{\varepsilon}{2}.
$
These results confirm that the MPE framework, supported by the construction of the unitary \(V\) in Lemma~\ref{lem:approx_distribution}, enables the parameterized distribution \(\gP = \gQ_{\boldsymbol{\theta}}\) to approximate \(\gQ_t\), as required for the universality approximation theorem (Theorem~\ref{thm:universality}).

\section{Learning Quantum Distributions with Incremental MPE}
Theorem~\ref{thm:universality} guarantees that the MPE can approximate any target distribution \(\gQ_t\) over \(n\)-qubit pure states within a 1-Wasserstein distance of \(\varepsilon > 0\). However, in this general case, constructing an efficient parameterized quantum distribution and collecting all states in the projected ensemble can require intensive resources. 
\Eqref{eqn:cover:qt:bound} implies $n_a = \lceil \log N \rceil = \gO(D \log(2/\varepsilon))$ ancilla qubits, which becomes inefficient when $D$ scales exponentially or polynomially (with high degree) with $n$.
In practical scenarios with a structured $\gQ_t$, we present an Incremental MPE framework that iteratively approximates \(\gQ_t\) through a layer-wise training scheme, reducing computational complexity and empirically improving the trainability and convergence for the optimization.
We employ the fidelity-based 1-Wasserstein distance metric for training. The fidelity-based distance provides the upper-bound for the trace distance via the Fuchs-van de Graaf inequality: $d(|\mu\rangle, |\phi\rangle) \leq \sqrt{1 - \kappa(|\mu\rangle, |\phi\rangle)}$, ensuring that low 1-Wasserstein distance implies low $W_1$ (\Eqref{def:wasserstein}).

\subsection{Incremental MPE Framework}
Given a target distribution \(\gQ_t\), the learning process aims to construct a parameterized $\gQ_{\vth}$ that approximates \(\gQ_t\) through \(K\) iterative cycles of unitary transformations and measurements.

First, we sample a training dataset \(\mathcal{S} = \{ |\psi_0\rangle, \ldots, |\psi_{N-1}\rangle \}\) consisting of \(N\) pure \(n\)-qubit states drawn from \(\gQ_t\). 
The process continues with an initial ensemble \(\tilde{\mathcal{S}}_0 = \{ |\tilde{\psi}^{(0)}_j\rangle \}_j\), where the states are sampled from a random distribution, such as Haar product states. At each cycle \(k = 0, \ldots, K-1\), we apply a parameterized unitary \(V_k = V(\boldsymbol{\theta}_k)\) to the composite system of the data system \(D\) (with \(n_d = n\) qubits) and an auxilary system \(F\) (with \(n_f\) qubits), initialized in the state \(|\tilde{\psi}^{(k)}_j\rangle_D \otimes |0\rangle_F\). 
We can think $F$ represents the composite system of $A$ and $M\backslash D$.
This is followed by a projective measurement on the ancilla system in the computational basis, yielding an outcome \(\vz^{(k)}_j \in \{0,1\}^{n_f}\) and a corresponding state \(|\tilde{\psi}^{(k+1)}_j\rangle_D\) in the data system. The operation at cycle \(k\) is formalized as:
\begin{align}
\Phi^{(k)}_j(|\tilde{\psi}^{(k)}_j\rangle) = \frac{(I_D \otimes \Pi_F) V_k |\tilde{\Psi}^{(k)}_j\rangle}{\sqrt{\langle \tilde{\Psi}^{(k)}_j | V_k^\dagger (I_D \otimes \Pi_F) V_k | \tilde{\Psi}^{(k)}_j \rangle}} = |\tilde{\psi}^{(k+1)}_j\rangle_D \otimes |\vz^{(k)}_j\rangle_F,
\end{align}
where \(\Pi_F = |\vz^{(k)}_j\rangle\langle \vz^{(k)}_j|_F\) is the projector onto the ancilla measurement outcome, and \(|\tilde{\Psi}^{(k)}_j\rangle = |\tilde{\psi}^{(k)}_j\rangle_D \otimes |0\rangle_F\). The resulting ensemble \(\tilde{\mathcal{S}}_{k+1} = \{ |\tilde{\psi}^{(k+1)}_j\rangle \}_j\) mirrors the structure of the MPE framework but is generated iteratively to reduce resource demands.

This process is repeated for \(K\) cycles, with the sequence of parameterized unitaries \(V_0, \ldots, V_{K-1}\) defining $\gQ_{\vth}$. The parameters \(\boldsymbol{\theta}_k\) of \(V_k\) are optimized to minimize a loss function \(\mathcal{D}(\mathcal{S}, \tilde{\mathcal{S}}_{k+1})\), which measures the dissimilarity between the training dataset \(\mathcal{S}\) and the ensemble \(\tilde{\mathcal{S}}_{k+1}\). After optimization at the cycle $k$, \(\boldsymbol{\theta}_k\) is fixed and the process optimizes \(\boldsymbol{\theta}_{k+1}\) in the next cycle. This layer-wise training approach~\citep{skolik:2021:layerwise,zhang:qddpm:prl:2024} decomposes the learning problem into \(K\) manageable sub-tasks, each with a small number of trainable parameters, facilitating convergence.

In our numerical experiments, each $V_k$ is constructed using a Hardware Efficient Ansatz on $n_q=n_d + n_f$ qubits with $L$ layers as
$V_k(\vth_k) = \prod_{l=1}^L \tilde{\Omega}_k \tilde{W}_k(\vth_k)$,
where
$\tilde{W}_k(\vth_k) = \prod_{j=1}^{n_q} e^{-i\theta_{k,2j-1}\frac{X_j}{2}} e^{-i\theta_{k,2j-2}\frac{Y_j}{2}}$
and $\tilde{\Omega}_k = \prod_{j=1}^{n_q-1} CZ_{j,j+1}$.
Here, $ X_j $ and $ Y_j $ are Pauli-X and Pauli-Y operators acting on the $ j $-th qubit, implementing single-qubit rotations about the $ y $- and $ z $-axes, parameterized by $ \theta_{k,2j-1} $ and $ \theta_{k,2j-2} $, respectively.  The $ \text{CZ}_{a,b} $ gate is a two-qubit controlled-Z gate that applies a $ Z $ operation to the target qubit (index $ b $) when the control qubit (index $ a $) is in the state $ |1\rangle $, generating entanglement between qubit pairs in $ \tilde{\Omega}_k $.

\subsection{Metrics to compare ensembles}
To quantify the similarity between ensembles, we employ a symmetric, positive definite quadratic kernel \(\kappa(|\mu\rangle, |\phi\rangle)\) to define loss functions. This kernel can be computed efficiently using techniques such as the SWAP test (for state fidelity) or classical shadows (for classical-based computation) \cite{Huang:2020:shadows}. 
Our study employs three key metrics:
\begin{enumerate}
    \item \textbf{Maximum Mean Discrepancy (MMD)}: The MMD distance between two ensembles \(\mathcal{X} = \{ |\mu_i\rangle \}_i\) and \(\mathcal{Y} = \{ |\psi_j\rangle \}_j\) is defined as:
    \begin{align}
    \mathcal{D}_{\text{MMD}}(\mathcal{X}, \mathcal{Y}) = \bar{\kappa}(\mathcal{X}, \mathcal{X}) + \bar{\kappa}(\mathcal{Y}, \mathcal{Y}) - 2 \bar{\kappa}(\mathcal{X}, \mathcal{Y}),    
    \end{align}
    where \(\bar{\kappa}(\mathcal{X}, \mathcal{Y}) = \mathbb{E}_{|\mu\rangle \in \mathcal{X}, |\phi\rangle \in \mathcal{Y}}[\kappa(|\mu\rangle, |\phi\rangle)]\). 
    \item \textbf{1-Wasserstein Distance}: Given a normalized kernel (\(\kappa(|\phi\rangle, |\phi\rangle) = 1\) for all \(|\phi\rangle\)), we define a pairwise cost matrix \(\mathbf{C} = (C_{i,j}) \in \mathbb{R}^{|\mathcal{X}| \times |\mathcal{Y}|}\) with \(C_{i,j} = 1 - \kappa(|\mu_i\rangle, |\psi_j\rangle)\). The 1-Wasserstein distance is computed as the solution to the optimal transport problem:
    \begin{align}
    \mathcal{D}_{\text{Wass}}(\mathcal{X}, \mathcal{Y}) = \min_{\mathbf{P}} \sum_{i,j} P_{i,j} C_{i,j}, \quad \text{s.t.} \quad \mathbf{P} \mathbf{1}_{|\mathcal{Y}|} = \mathbf{a}, \quad \mathbf{P}^\top \mathbf{1}_{|\mathcal{X}|} = \mathbf{b}, \quad \mathbf{P} \geq 0,
    \end{align}
    where \(\mathbf{1}_{|\mathcal{X}|}\) and \(\mathbf{1}_{|\mathcal{Y}|}\) are all-ones vectors of sizes \(|\mathcal{X}|\) and \(|\mathcal{Y}|\), respectively, and \(\mathbf{a} \in \mathbb{R}^{|\mathcal{X}|}\), \(\mathbf{b} \in \mathbb{R}^{|\mathcal{Y}|}\) are probability vectors (typically set to uniform, \(\mathbf{a} = \frac{1}{|\mathcal{X}|} \mathbf{1}_{|\mathcal{X}|}\), \(\mathbf{b} = \frac{1}{|\mathcal{Y}|} \mathbf{1}_{|\mathcal{Y}|}\)).
    \item \textbf{Vendi Score (VS)}: The Vendi Score (VS)~\citep{friedman:2023:vendi} is a metric designed to evaluate the diversity of a set of samples. Given an ensemble \(\mathcal{X} = \{ |\mu_i\rangle \}_i\), and the normalized kernel matrix $\gK=\kappa(\ket{\mu_i}, \ket{\mu_j})$ defined in $\gX$, the VS is computed as the exponential of the negative sum of the eigenvalues $ \lambda_i $ (normalized by the sample size $N$) multiplied by their logarithms, or equivalently, the exponential of the negative trace of the normalized similarity matrix $\gK/N$ times its logarithm. Mathematically, it is expressed as:
$ VS(\gX) = \exp\left(-\sum_{i=1}^N \lambda_i \log \lambda_i\right) = \exp\left(-\text{tr}\left(\frac{\gK}{N} \log \frac{\gK}{N}\right)\right)$.
The VS measures the spread or redundancy of samples: a high score indicates broad coverage with diverse and non-repetitive samples, while a low score suggests collapsed distributions.
\end{enumerate}

\section{Demonstration}
To assess the efficacy of the Incremental MPE framework, we performed numerical experiments on two representative datasets: a synthetic clustered quantum state distribution and a quantum distribution derived from a chemistry dataset. Quantum circuits were simulated using the TensorCircuit library~\citep{zhang:2023:tensorcircuit}, while JAX~\citep{jax:2018:github} facilitated automatic differentiation for gradient-based optimization. Circuit parameters were initialized uniformly in the interval $[-\pi, \pi]$, and optimization was conducted via the Adam algorithm with a learning rate of $0.001$.

\subsection{Multi-Cluster quantum distribution}
We consider a mixture of \(n\)-qubit pure states centered around distinct clusters, modeling multimodal quantum data relevant to applications such as quantum chemistry and error correction. The experiments demonstrate the framework’s ability to approximate the target distribution \(\gQ_t\) with high fidelity, its scalability across varying qubit numbers, and its robustness to noise.

We construct a target distribution \(\gQ_t\) as a mixture of three clusters (40\% with cluster 1, 40\% with cluster 2, and 20\% with cluster 3) of \(n\)-qubit ($n=6$) pure states. Cluster 1 is centered on $|0\rangle^{\otimes n}$, cluster 2 on $|1\rangle^{\otimes n}$, and cluster 3 on the GHZ state $\frac{1}{\sqrt{2}} \left( |0\rangle^{\otimes n} + |1\rangle^{\otimes n} \right)$.
For each cluster, noise is introduced by applying random single-qubit rotations with angles drawn from a Gaussian distribution (\(\mathcal{N}(0, \sigma^2)\), \(\sigma = 0.05\)) to simulate quantum device imperfections.

\begin{figure}[h]
\centerline{\includegraphics[width = 1.0\linewidth]{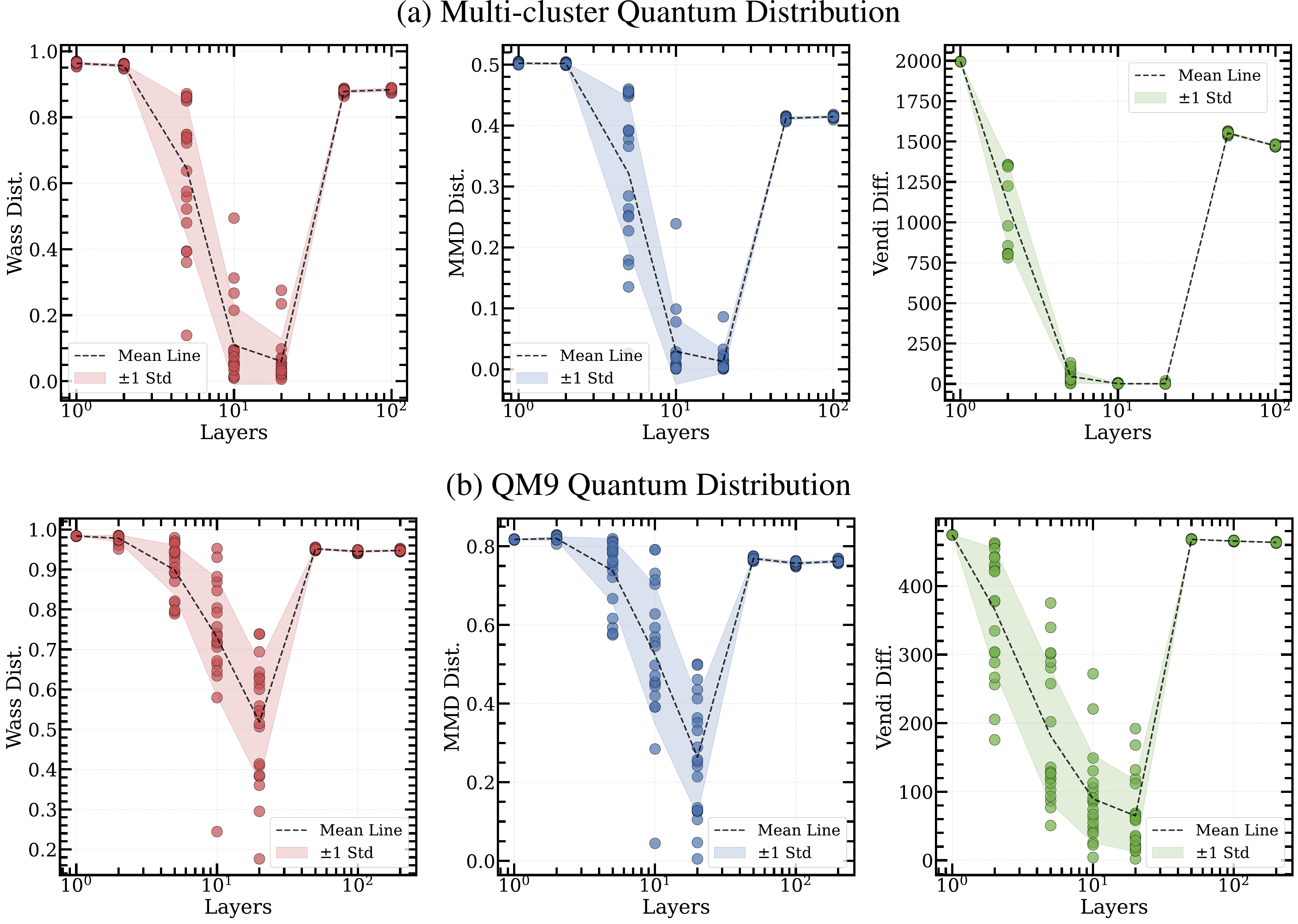}}
\caption{Variation of the evaluation metric with changing circuit ansatz layers using the Incremental MPE framework to learn quantum distributions for (a) multi-cluster states and (b) molecular quantum states from a QM9 subset. Circle markers indicate individual trials, dotted lines show the mean over 20 trials, and shaded areas represent one standard deviation.}\label{fig:vary:layers}
\end{figure}

In our numerical experiments, we set $n_f = n/2$ and vary the number of incremental steps $K = 1, 2, 5, 10, 20, 50, 100$, with each unitary $V_k$ comprising $L=100/K$ layers to maintain a constant total number of layers. Training to optimize each $V_k$ employs 1000 samples over $100\times L$ epochs, with the number of epochs scaling with the number of layers, as more parameters necessitate additional epochs for effective optimization. We adopt mini-batch training, where each epoch consists of 10 iterations, processing 10 mini-batches of size $B=100$. At each iteration, the loss function is the 1-Wasserstein distance $\gD_{\textup{Wass}}(S_{\textup{train}}, S_{\textup{out}})$, where $S_{\textup{train}}$ is a set of $B$ quantum states sampled from the target distribution, and $S_{\textup{out}}$ is a set of $B$ quantum states generated by the model.

Figure~\ref{fig:vary:layers}(a) depicts the variation of the evaluation metric with the number of layers, utilizing the 1-Wasserstein distance, MMD distance, and VS difference between 3000 generated and target samples. The experiments are conducted with 20 trials. All metrics indicate optimal performance around a specific number of layers $L$, where distances and differences are minimized, suggesting enhanced alignment between generated and target samples, though variability increases beyond this point. 
For a small $L$, even with a large number of steps $K = 100/L$, the expressivity of $V_k$ remains insufficient, leading to local minima in each incremental step. Conversely, an excessively large $L$ introduces excessive expressivity in each $V_k$, leading to overparameterization or barren plateaus, resulting in optimization being challenging.
The optimal $L$ achieves an average 1-Wasserstein distance of less than 0.1, an average MMD distance of less than 0.05, and an average VS difference of less than 1.0, facilitating the effective learning of the multi-cluster quantum distribution.

\subsection{QM9 quantum distribution}
We demonstrate our framework to learn the QM9 dataset~\citep{ramakrishnan:2014:QM9}, a widely recognized benchmark in computational chemistry. This dataset comprises approximately 134,000 small organic molecules, each with up to 9 heavy atoms (C, N, O, F) and additional hydrogens, totaling up to 29 atoms per molecule, along with their molecular properties and 3-D coordinates. Derived from the GDB-17 database~\citep{ruddigkeit:2012:GDB-17}, QM9 is curated for quantum chemistry tasks, including molecular property prediction and 3-D structure generation.
Given the current scale of our quantum simulation, evaluating the whole dataset is impractical. Therefore, we filter QM9 to include only molecules with exactly 7 heavy atoms and 2 distinct ring systems, yielding a specific subset of 488 molecules with uniform structural properties. Each 3-D molecule within this subset is encoded into a 7-qubit quantum state (see Appendix~\ref{appx:encoding:3D} for details), enabling the task of learning the quantum data distribution corresponding to this QM9 subset.

In our numerical experiments, we set $ n_d = 7 $, $ n_f = 3 $, and vary the number of incremental steps $ K = 1, 2, 5, 10, 20, 50, 100, 200 $, with each unitary $ V_k $ comprising $ L = 200 / K $ layers. We employ mini-batch training with a batch size $ B = 100 $ for each $ V_k $, utilizing 200 training samples over $ 100 \times L $ epochs. The experiments are conducted over twenty trials. The loss function, measuring the divergence between two ensembles $ S_{\text{train}} $ and $ S_{\text{out}} $, is defined as a linear combination of the 1-Wasserstein distance and the Vendi Score (VS) square difference: $ \mathcal{D}_{\text{Wass}}(S_{\text{train}}, S_{\text{out}}) + \lambda [VS(S_{\text{train}}) - VS(S_{\text{out}})]^2 $, where $ \lambda = 0.0001 $ balances the contributions.
Figure~\ref{fig:vary:layers}(b) illustrates the variation of the metrics with the number of layers, based on comparisons between 488 generated and target samples. The model exhibits optimal performance around $ L = 20 $ layers; however, the high distances indicate incomplete convergence. To potentially reduce the Wasserstein distance below 0.1, increasing the number of training epochs, $n_f$, and refining the ansatz circuit design could prove beneficial.

\section{Conclusion}

In this study, we have developed a universality approximation theorem for the MPE framework, demonstrating its ability to approximate any $n$-qubit pure state distribution within a specified 1-Wasserstein distance error. This theoretical result highlights the MPE framework's potential as a versatile tool for quantum data generation in QML. While the primary contribution lies in this universality theorem, the proposed Incremental MPE variant enhances practical applicability by mitigating optimization issues through layer-wise training, rendering it well-suited for NISQ devices. Numerical validations conducted on clustered quantum states and QM9 molecular datasets substantiate the framework's effectiveness in learning complex quantum distributions.
These findings provide a robust foundation for advancing quantum generative modeling, with significant implications for quantum chemistry, materials science, and related fields. 

\textbf{Limitations and Future Work.} While the universality approximation theorem presents the formal theoretical findings to approximate any distribution of quantum data with arbitrary error, the sample complexity often exhibits exponential scaling with the intrinsic dimension of the data manifold~\citep{narayanan:2010:NeurIPS:complexity}.
Future work may reduce this scale due to a specific shape of the target distribution, where smoother assumptions or symmetries can yield polynomial rates.
The MPE framework relies on precise ancilla-assisted measurements, which may introduce errors in NISQ hardware due to imperfect gate operations. Additionally, while layer-wise training of Incremental MPE improves trainability, the computational cost of optimizing large-scale quantum circuits remains significant. 
Future work could focus on optimizing resource requirements for approximation protocols to enhance efficiency. Extending the MPE framework to mixed state distributions would broaden its applicability to open quantum systems. A potential direction is to incorporate the concept of the mixed projected ensemble~\citep{yu2025:mixedstate},  which is built from a local region of a quantum many-body system with a partial loss of measurement outcomes.

\section*{Ethics Statement}
This research does not involve human subjects, sensitive datasets, or applications with direct societal harm. All experiments were conducted using publicly available datasets, and no conflicts of interest or external sponsorship influenced the research outcomes.

\section*{Reproducibility Statement}
To ensure the reproducibility of our results, we provide detailed information in the main paper, appendix, and supplementary materials. The proposed model and algorithms are described in the main text, with implementation details in Appendices~\ref{lem:approx_distribution} and~\ref{appx:encoding:3D}. Source code is available in the supplementary materials~\cite{mpe:2025:github}. For experiments, we used the QM9 dataset, with preprocessing steps fully documented in Appendix~\ref{appx:encoding:3D}.

\section*{LLM Usage}
We used a large language model (LLM) to assist with polishing the English language in this paper. Specifically, the LLM was used to enhance the grammar, clarity, and readability of the text without altering its original meaning. All LLM-generated content was carefully reviewed and edited by the authors to ensure accuracy, originality, and alignment with the research objectives. No LLMs were used for research ideation, data analysis, or generating discussion for the results.

\bibliography{iclr2026_tran}
\bibliographystyle{iclr2026_conference}

\newpage
\appendix
\section{Appendix}

\subsection{Background on Quantum Computing}\label{appx:qc}
We provide the essential concepts in quantum computing necessary for understanding our study. For a comprehensive treatment, we refer the reader to~\cite{nielsen:2010:quantum}.

\textbf{Quantum bit and quantum states}. 
A fundamental unit in quantum computing is the quantum bit, or qubit, which represents the state of a quantum system. Before measurement, a qubit can exist in a superposition of basis states, but upon measurement, it collapses into one of the basis states with probabilities determined by the quantum state.

The pure state of a quantum system consisting of $n$ qubits is described by a vector in a Hilbert space $\mathcal{H} = (\mathbb{C}^2)^{\otimes n}$. The mathematical representation of a quantum state depends on the choice of basis. For instance, using the orthogonal computational basis states $\ket{0} = \begin{pmatrix} 1 \\ 0 \end{pmatrix}$ and $\ket{1} = \begin{pmatrix} 0 \\ 1 \end{pmatrix}$, a single-qubit state can be expressed as a linear combination $\ket{\psi} = \alpha \ket{0} + \beta \ket{1}$, where $\alpha, \beta \in \mathbb{C}$ are complex amplitudes satisfying the normalization condition $|\alpha|^2 + |\beta|^2 = 1$.
Computational basis states for multi-qubit systems are tensor products, such as $\ket{01} = \ket{0} \otimes \ket{1} = \begin{pmatrix} 0 \\ 1 \\ 0 \\ 0 \end{pmatrix}$. Any pure quantum state $\ket{\psi} \in \mathcal{H}$ satisfies $\langle \psi | \psi \rangle = 1$, where $\bra{\psi}$ denotes the conjugate transpose of $\ket{\psi}$.

Pure states represent systems in definite quantum states, while mixed states describe statistical ensembles of pure states. A mixed state is represented by a density operator $\rho$, which is a positive semidefinite Hermitian operator with trace one ($\mathrm{Tr}(\rho) = 1$). For a pure state $\ket{\psi}$, the density operator is $\rho = \ket{\psi}\bra{\psi}$. For a mixed state as a probabilistic mixture of pure states $\{\ket{\psi_i}\}$ with probabilities $\{p_i\}$, the density operator is $\rho = \sum_i p_i \ket{\psi_i}\bra{\psi_i}$. For example, the density matrix for $\ket{0}$ is $\rho_0 = \ket{0}\bra{0} = \begin{pmatrix} 1 & 0 \\ 0 & 0 \end{pmatrix}$.

\textbf{Quantum gates and circuits}. 
A quantum computer operates via quantum circuits, consisting of wires (for qubits) and unitary gates that evolve quantum states. Each gate $U$ is a unitary operator on $\mathcal{H}$, and the circuit's overall action is the matrix product of these unitaries, computable via tensor products.

A fundamental single-qubit gate is the Hadamard gate $H$, which creates superposition from computational basis states:
$H = \frac{1}{\sqrt{2}} \begin{pmatrix} 1 & 1 \\ 1 & -1 \end{pmatrix}$.
Applying $H$ to $\ket{0}$ yields the equal superposition $\frac{1}{\sqrt{2}} (\ket{0} + \ket{1})$, and to $\ket{1}$ yields $\frac{1}{\sqrt{2}} (\ket{0} - \ket{1})$.

Examples of parameterized single-qubit gates include the rotation gates:
\begin{align}
    R_y(\theta) = \begin{pmatrix} \cos\left(\frac{\theta}{2}\right) & -\sin\left(\frac{\theta}{2}\right) \\ \sin\left(\frac{\theta}{2}\right) & \cos\left(\frac{\theta}{2}\right) \end{pmatrix}, \quad
R_z(\theta) = \begin{pmatrix} e^{-i \theta / 2} & 0 \\ 0 & e^{i \theta / 2} \end{pmatrix}.
\end{align}

Common two-qubit gates include the controlled-Z (CZ) gate:
\begin{align}
\mathrm{CZ} = \begin{pmatrix}
1 & 0 & 0 & 0 \\
0 & 1 & 0 & 0 \\
0 & 0 & 1 & 0 \\
0 & 0 & 0 & -1
\end{pmatrix},
\end{align}
which applies a phase flip to the target qubit if the control qubit is $\ket{1}$. 

An example of a parameterized two-qubit gate is the controlled rotation:
\begin{align}
CR_x(\theta) = \begin{pmatrix}
1 & 0 & 0 & 0 \\
0 & 1 & 0 & 0 \\
0 & 0 & \cos\left(\frac{\theta}{2}\right) & -i \sin\left(\frac{\theta}{2}\right) \\
0 & 0 & -i \sin\left(\frac{\theta}{2}\right) & \cos\left(\frac{\theta}{2}\right)
\end{pmatrix}.
\end{align}
Another important parameterized two-qubit gate is the rotation around the ZZ axis:
\begin{align}
R_{ZZ}(\theta) = \exp\left(-i \frac{\theta}{2} Z \otimes Z\right) = \begin{pmatrix}
e^{-i \theta / 2} & 0 & 0 & 0 \\
0 & e^{i \theta / 2} & 0 & 0 \\
0 & 0 & e^{i \theta / 2} & 0 \\
0 & 0 & 0 & e^{-i \theta / 2}
\end{pmatrix},
\end{align}
which generates entangling interactions between two qubits and is diagonal in the computational basis.

\textbf{Quantum measurement}.
Quantum measurements extract classical information from a quantum state, fundamentally altering it via collapse. For a projective measurement defined by a Hermitian observable $M$ on $\mathcal{H}$, we decompose $M = \sum_m m P_m$, where ${m}$ are the distinct eigenvalues and ${P_m}$ are the corresponding orthogonal projectors satisfying $\sum_m P_m = I$ and $P_m P_{m'} = \delta_{mm'} P_m$. Given a pure state $\ket{\psi}$, the probability of outcome $m$ is $p(m) = \bra{\psi} P_m \ket{\psi} = \langle \psi | P_m | \psi \rangle$. Upon observing $m$, the state collapses to the normalized post-measurement state $\frac{P_m \ket{\psi}}{\sqrt{p(m)}}$.

For mixed states described by $\rho$, the outcome probability generalizes to $p(m) = \mathrm{Tr}(P_m \rho)$, and the updated density operator is $\rho' = \frac{P_m \rho P_m}{p(m)}$. In the common case of measuring individual qubits in the computational basis, the projectors are $P_0 = \ket{0}\bra{0}$ and $P_1 = \ket{1}\bra{1}$ per qubit, yielding binary outcomes with probabilities given by the diagonal elements of the reduced density matrix. This collapse from superposition to a definite basis state underscores the irreversible nature of measurement in quantum mechanics.

\subsection{Learning an ensemble of quantum states}\label{appx:learn:ensemble}
In quantum information science, ensembles of quantum states play a pivotal role in characterizing randomness and universality in quantum systems.
An ensemble of quantum states $\gE = \{(p_j, \ket{\psi_j}) \}$ consists of a set of pure quantum states $\ket{\psi_i}$ in a Hilbert space $\gH$, each weighted by a probability $p_j$ such that $\sum_j p_j = 1$. This ensemble represents a probabilistic mixture, capturing stochastic processes that generate quantum states. Unlike an individual wave function or density matrix, randomness is an emergent property of the ensemble.

We want to clarify that learning an ensemble of quantum states is a different problem from preparing the density matrix $\rho = \sum_j p_j \ket{\psi_j}\bra{\psi_j}$ of the ensemble.
Here, $\rho$ alone does not uniquely determine the ensemble, as multiple distinct ensembles can yield the same density matrix. The goal of learning an ensemble of quantum states is to learn how to sample quantum states from an unknown distribution. The density matrix encodes the average expectation values of an observable $O$: 
\begin{align}
\sum_j p_j \bra{\psi_j} O \ket{\psi_j} = \Tr(O\rho).
\end{align}
The density matrix is insufficient to distinguish ensembles, which require higher-order moments. For example, the $k$-th moment operator is $\rho^{(k)}_{\gE} = \sum_j p_j \left( \ket{\psi_j}\bra{\psi_j} \right)^{\otimes k}$. This acts on $k$ copies of the Hilbert space and describes an incoherent sum of $k$ identical states. From the $k$-th moment operator, we can calculate the $k$-the moment of the observable $O$ as: 
\begin{align}
    \sum_j p_j \left( \bra{\psi_j}O\ket{\psi_j} \right)^k = \Tr(O^{\otimes k}\rho^{(k)}_{\gE}).
\end{align}

In general, learning an ensemble involves estimating its density matrix, moments, or nonlinear properties from samples. While the first moment (density matrix) is accessible via standard quantum state tomography, higher moments require statistical estimation from labeled samples drawn from $\mathcal{E}$. 
One of the most interesting problems is to construct an ensemble that mimics (or approximates) the Haar ensemble up to the $k$-th moment. An $\varepsilon$-approximate $k$-designs satisfies $\| \rho^{(k)}_{\gE} - \rho^{(k)}_{\textup{Haar}}\| \leq \varepsilon$.
Approximate designs emerge in physical systems, such as random unitary circuits or Hamiltonian evolutions.

Many projected ensembles (MPEs) offer a natural realization of $\varepsilon$-approximate $k$-designs~\citep{choi2023preparing-d8c,cotler2023emergent-4eb}. Given a many-body state $\ket{\Psi}$ on subsystems $A$ ($n_a$ qubits) and $M$ ($n_m$ qubits), projective measurements on $A$ in a local basis $\{\ket{\vz_A}\}$ yield the ensemble on $M$ as
\begin{align}
    \gE_{\Psi, M} = \{ \left( p(\vz_A),  \ket{\psi(\vz_A)}_M \right)\}_{\vz_A},
\end{align}
with $p(\vz_A)  = \bra{\Psi}\left( \ket{\vz_A}\bra{\vz_A} \otimes \openone_M \right)\ket{\Psi}$ and $\ket{\psi(\vz_A)}_M = \left( \bra{\vz_A} \otimes \openone_M \right) \ket{\Psi} / \sqrt{p(\vz_A)}$.
The MPE generates a classical probability distribution $p(\vz_A)$ over measurement outcomes, but the projected states $\ket{\psi(\vz_A)}_M$ are genuinely quantum.
Such ensembles approximate $k$-designs for generic generator states $\ket{\Psi}$, particularly in chaotic systems when $n_a$ is sufficiently large.

MPEs were originally developed to approximate $k$-designs from Haar-random ensembles. However, in our study, they are used innovatively to prove the universality of learning distributions from quantum states. 
While low-entanglement MPEs may be classically simulable, the framework's advantage lies in quantum hardware efficiently preparing the many-body state $\ket{\Psi}$ for sampling complex, non-local distributions (e.g., in quantum chemistry, where classical methods struggle with superposition and entanglement). This offers the potential for quantum speedups over classical generative models for tasks such as simulating molecular ensembles, as classical sampling from such distributions can require exponential resources.

\subsection{Challenges in Quantum Machine Learning (QML)}~\label{appx:QML}

QML has garnered significant interest for its potential to leverage quantum systems for enhanced data processing and generative tasks. However, recent advancements have highlighted several challenges and negative results that temper claims of quantum advantage. This appendix provides an overview of these issues, drawing from key literature, and positions our MPE framework in this context. We focus on dequantization results, trainability bottlenecks such as barren plateaus, and classical simulability, while noting promising mitigation strategies.

\textbf{Dequantization and Classical Equivalents.} 
A growing body of work demonstrates that many QML algorithms, initially thought to offer exponential speedups, can be ``dequantized"—simulated classically with comparable efficiency under realistic assumptions, such as access to classical data models (e.g., length-squared sampling).
For instance, \cite{tang:2019:STOC:dequant} provides dequantization for quantum recommendation systems, showing that classical algorithms can achieve similar performance with polynomial resources.
\cite{tang:2021:prl:qpca} extends this to quantum principal component analysis, providing a classical counterpart that matches quantum outputs under standard data access.
\cite{chia:2020:STOC:dequant} and \cite{gharibian:2022:STOC:dequant} dequantize aspects of quantum singular value transformations~\citep{gilyen:2019:STOC:QSVT}, revealing classical simulations for tasks like quantum linear algebra.

More recent studies, such as~\cite{cerezo:2025:natcom:does-b20} and~\cite{fuster:2025:ICLR:Dequantization}, apply dequantization to variational QML models, underscoring that claimed advantages often vanish when considering trivial datasets or sampling models of trivial distributions. These results suggest that QML's edge may be limited to regimes with inherent quantum structure, such as high-entanglement quantum many-body systems, where classical simulation becomes inefficient.
In our MPE framework, we position it as potentially resistant to such dequantization in these high-entanglement scenarios. MPE leverages measurement-induced ensembles from many-body wave functions, which can capture non-local quantum correlations that classical methods struggle to replicate efficiently. While we do not claim a guaranteed quantum advantage, this motivates further exploration of MPE for tasks like quantum chemistry simulations, where entanglement plays a central role.

\textbf{Barren Plateaus.} 
A major bottleneck in QML using PQCs is the barren plateau phenomenon, where gradients vanish exponentially with system size or circuit depth, rendering optimization intractable~\citep{clean:2018:natcom:barren}. This arises from random initialization in high-dimensional parameter spaces, global measurements, excessive entanglement in the initial state, the circuit ansatz, and the existence of hardware noise, leading to flat loss landscapes~\citep{larocca:2025:barren-3ee}.
Our Incremental MPE variant aims to address this by leveraging layer-wise training~\citep{skolik:2021:layerwise}, but full theoretical guarantees for this heuristic are lacking.

\textbf{Backpropagation Scaling.}
The backpropagation scaling problem in PQCs refers to the inefficiency of gradient computation during the training of variational quantum algorithms. Unlike classical neural networks, where gradients scale with constant or logarithmic overhead via backpropagation, PQCs rely on stochastic measurements and methods like the parameter-shift rule, leading to costs that grow linearly with the number of parameters. This arises from requiring separate circuit evaluations per parameter, amplified by shot noise, making large-scale optimization impractical. There are promising directions to mitigate this issue, such as constructing structured PQCs with commuting generators to enable parallel gradient estimation~\citep{Bowles:2025:backpropagation}, balancing expressivity and trainability~\citep{chinzei:2025:npj:tradeoff-fac}, and classical surrogates for loss functions where classical approximations guide quantum optimization~\citep{erik:2025:surrogate}.

These challenges could limit the practicality and trainability of MPE in real quantum hardware. However, the argument for classical simulability in training is particularly relevant for generative models: one can classically train a variational state by minimizing expectation values, but sampling from such states classically is often prohibitively expensive. This highlights a key distinction: while training may be dequantized, quantum hardware offers advantages for sampling entangled ensembles.
From this perspective, our universality theorem is complementary. It ensures MPE can, in principle, capture any pure-state distribution before optimizing for hardware, focusing on expressivity rather than guaranteed speedup.

\subsection{Proof for the bound of $\delta$-covering number}\label{apx:covering:number}
We present an intuition for an iterative algorithm to construct a $\delta$-net of a given quantum distribution $\gQ_t$. Pick a point $\ket{\psi_1}\sim \gQ_t$ arbitrarily sampled from $\gQ_t$, then
pick $\ket{\psi_2}\sim \gQ_t$ that is farther than $\delta$ from $\ket{\psi_1}$, then pick $\ket{\psi_3}\sim \gQ_t$ that is farther than from both $\ket{\psi_1}$ and $\ket{\psi_2}$, and so on. If $\gQ_t$ is compact, this process stops in finite time and gives an $\delta$-net of $\gQ_t$.

In quantum mechanics, the Hilbert space $\mathbb{C}^D$ (a $D$-dimensional complex vector space) describes a quantum system with $D$ possible basis states. A pure state of this system is represented by a unit vector $|\psi\rangle \in \mathbb{C}^D$ satisfying $\langle \psi | \psi \rangle = 1$, but physically equivalent states differ by a global phase: $|\psi\rangle \sim e^{i\theta} |\psi\rangle$ for any real $\theta$, as this phase does not affect observable quantities like probabilities or expectation values.
Thus, the set of distinct pure states is not the full unit sphere $S^{2D-1} \subset \mathbb{R}^{2D}$ (real/imaginary parts of $\mathbb{C}^D$), but rather the quotient space obtained by identifying vectors that differ by a phase factor from the group $U(1)$ (unit complex numbers). This quotient is the complex projective space $\mathbb{CP}^{D-1}$, defined as
$\mathbb{CP}^{D-1} = S^{2D-1} / U(1),$
where each point in $\mathbb{CP}^{D-1}$ corresponds to a 1-dimensional complex subspace (a ``ray") in $\mathbb{C}^D$.
Then if we define the manifold $\mathcal{M}= \{ |\psi\rangle\langle\psi| : |\psi\rangle \in \mathbb{C}^{D} \}$ of pure states in $\mathbb{C}^D$, we have the relation $\mathcal{M} \cong \mathbb{CP}^{D-1}$.

Let $S \subset \mathbb{C}^{2^n}$ be a fixed $D$-dimensional complex subspace (i.e., $S$ has an orthonormal basis with $D$ elements $\ket{e_1}, \ldots, \ket{e_D}$), and define $\gK = \{ |\psi\rangle\langle\psi| : |\psi\rangle \in S, \langle\psi|\psi\rangle = 1 \}$ as the submanifold of pure states supported entirely within $S$. Equipped with the trace distance $d(\rho,\sigma) = \frac{1}{2} \|\rho - \sigma\|_1$, the metric space $(\gK, d)$ is isometric to the manifold of pure states on $\mathbb{C}^D$ under the induced Fubini-Study metric (rescaled to match trace distance).
This ensures that covering numbers, volumes, and discretization strategies for $K$ inherit directly from $\mathbb{CP}^{D-1}$, without dilution from the larger space.

We consider the $\delta$-net discretizing the manifold $\mathcal{M}$ of pure states in $\mathbb{C}^D$ under the trace distance metric $d$.
Based on Lemma 1 in~\cite{akibue:2022:covering}, we estimate $\gN(\gM, d, \delta)$—the cardinality of the smallest $\delta$-net of $\gM$ and derive formal lower and upper bounds in terms of $D$ and $\delta$.

\begin{lemma}[Bound for covering number]\label{lem:covering:bound}
Let $\mathcal{M} = \{ |\psi\rangle\langle\psi| : |\psi\rangle \in \mathbb{C}^{D} \}$ be the manifold of pure states in $\mathbb{C}^D$, equipped with the trace distance $d$. For $\delta \in (0, 1]$ and $D \geq 2$:
\begin{align}
    (1/\delta)^{2(D - 1)} \leq \gN(\gM, d, \delta) \leq 5 \cdot D \ln(D) \cdot (1/\delta)^{2(D - 1)}
\end{align}

\end{lemma}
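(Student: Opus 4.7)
The plan is to establish both bounds through a classical volume-ratio/packing argument on $\mathcal{M} \cong \mathbb{CP}^{D-1}$, which is a smooth compact manifold of real dimension $2(D-1)$. On this manifold, the trace distance between pure states takes the explicit form $d(|\psi\rangle\langle\psi|, |\phi\rangle\langle\phi|) = \sqrt{1 - |\langle\psi|\phi\rangle|^2}$, which is bi-Lipschitz equivalent to the Fubini-Study distance $\arccos|\langle\psi|\phi\rangle|$. Both bounds then reduce to estimates on Fubini-Study volumes of metric balls, combined with the total volume $|\mathcal{M}| = \pi^{D-1}/(D-1)!$. A crucial technical input I would use throughout is the exact distribution of the squared overlap $|\langle\psi|\phi\rangle|^2$ when $|\psi\rangle$ is Haar-random and $|\phi\rangle$ is fixed: this overlap follows a $\mathrm{Beta}(1, D-1)$ distribution, so $\Pr[|\langle\psi|\phi\rangle|^2 \geq 1 - \delta^2] = \delta^{2(D-1)}$ exactly, not merely asymptotically.

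For the lower bound, I would argue by a covering (volume) inequality: any $\delta$-net $\{|\psi_j\rangle\}_{j=1}^{N}$ produces balls $B_\delta(|\psi_j\rangle)$ whose union covers $\mathcal{M}$, so $N \geq |\mathcal{M}|/\max_{|\psi\rangle}|B_\delta(|\psi\rangle)|$ with $|\cdot|$ the Fubini-Study volume. Because the Haar measure on pure states is proportional to the Fubini-Study volume form and is invariant, the Beta distribution identity above translates directly into $|B_\delta(|\psi\rangle)|/|\mathcal{M}| = \delta^{2(D-1)}$. Dividing gives $\mathcal{N}(\mathcal{M}, d, \delta) \geq (1/\delta)^{2(D-1)}$ with the right constant.

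For the upper bound, I would switch to a probabilistic construction: draw $N$ i.i.d.\ Haar-random states $|\psi_1\rangle, \ldots, |\psi_N\rangle$ and show that with positive probability they form a $\delta$-net. For any fixed $|\phi\rangle$, the probability that none of the samples is within trace distance $\delta/2$ is at most $(1 - (\delta/2)^{2(D-1)})^N \leq \exp(-N(\delta/2)^{2(D-1)})$. To upgrade this pointwise bound to a uniform one, I would pre-cover $\mathcal{M}$ by an auxiliary $(\delta/2)$-net of cardinality $M$ (produced by the greedy iteration described earlier in the text), use the triangle inequality so that any $|\phi\rangle$ is within $\delta$ of some sampled state provided each auxiliary center has a sample within $\delta/2$, and apply a union bound:
\begin{align}
\Pr[\text{failure}] \leq M \exp\!\bigl(-N (\delta/2)^{2(D-1)}\bigr).
\end{align}
Choosing $N \asymp (2/\delta)^{2(D-1)} \ln M$ suffices, and bootstrapping $\ln M = O(D \ln(1/\delta))$ from a crude volume argument yields a bound of the claimed shape $\mathcal{N}(\mathcal{M}, d, \delta) \leq C \cdot D \ln(D) \cdot (1/\delta)^{2(D-1)}$.

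The main obstacle I anticipate is not the conceptual structure but the constant accounting needed to land precisely at the stated prefactor of $5$ and the logarithmic factor $\ln D$ rather than a looser $\ln(D/\delta)$. Obtaining the clean $\ln D$ requires either a two-stage bootstrap (first construct a rough $(\delta/2)$-net, then refine it using the Beta-distribution tail in its exact form) or a sharper optimization of the union bound that absorbs the $\ln(1/\delta)$ contributions into the polynomial $\delta$-factor. This is where I would follow the careful bookkeeping of~\cite{akibue:2022:covering} rather than attempt to reproduce the exact constants from scratch.
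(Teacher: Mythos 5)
Your lower bound is exactly the paper's: the exact ball-volume identity $\mu(B_\delta(\phi)) = \delta^{2(D-1)}$ under the unitarily invariant probability measure (your $\mathrm{Beta}(1,D-1)$ computation of the overlap distribution), divided into the total measure $1$, gives $\gN(\gM,d,\delta) \geq (1/\delta)^{2(D-1)}$; nothing more is needed there.

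The gap is in the upper bound, and it is structural rather than ``constant accounting.'' Your plan is: sample $N$ Haar-random states, show via a union bound that every center of an auxiliary $(\delta/2)$-net has a sample within $\delta/2$, and conclude by the triangle inequality. Two terms in the resulting estimate $N \asymp (2/\delta)^{2(D-1)}\ln M$ do not reduce to the claimed $5\, D\ln(D)\,(1/\delta)^{2(D-1)}$. First, halving the radius costs $(2/\delta)^{2(D-1)} = 4^{D-1}(1/\delta)^{2(D-1)}$, and $4^{D-1}$ is exponential in $D$, not a constant. Second, any volume bound on the auxiliary net gives $\ln M = \Theta(D\ln(1/\delta))$, and that $\ln(1/\delta)$ cannot be absorbed into the polynomial factor: for fixed $D$ and $\delta \to 0$ your bound grows like $(1/\delta)^{2(D-1)}\ln(1/\delta)$, which eventually exceeds $5D\ln(D)\,(1/\delta)^{2(D-1)}$. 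You can repair the first issue by shrinking the sampling radius only slightly, to $\delta_R = \tfrac{x}{1+x}\delta$ with $x \approx 2(D-1)\ln(2(D-1))$, so that the penalty $(1+1/x)^{2(D-1)}$ is $O(1)$; but the union bound over the auxiliary net still forces the $\ln(1/\delta)$ factor. The paper (following Appendix B of the cited covering-number reference) avoids the union bound entirely: after the random sampling stage it bounds the \emph{expected measure} of the uncovered region $A^c$ by $\exp(-J_R\delta_R^{2(D-1)})$, then greedily packs disjoint balls of radius $\delta_P = \delta - \delta_R$ into $A^c$; with $J_R = \lceil 2(D-1)\delta_R^{-2(D-1)}\ln(\delta_R/\delta_P)\rceil$ the packing contributes at most $\mu(A^c)/\delta_P^{2(D-1)} \leq (1/\delta_R)^{2(D-1)}$ additional centers --- no logarithm at all --- and the only logarithm left is $\ln(\delta_R/\delta_P) = \ln x = O(\ln D + \ln\ln D)$ from the sampling stage. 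Optimizing over $x$ and over $D$ numerically is what produces the clean $D\ln D$ and the prefactor $5$. As written, your argument proves \emph{a} covering bound, but not the one stated.
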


\begin{proof}
    The proof relies on volumetric arguments using the unitarily invariant probability measure $\mu$ on $\gP(\mathbb{C}^D)$, which normalizes the total volume to $\mu(\gP(\mathbb{C}^D)) = 1$.
The measure $\mu$ is a probability measure on $\mathbb{CP}^{D-1}$, meaning it assigns sizes to subsets such that the entire space has measure 1. It is derived from the Haar measure on the unitary group $U(D)$, inducing a uniform distribution: picking a random unitary and applying it to a fixed state (e.g., $|0\rangle$) samples uniformly from $\mu$. Without normalization, the raw volume of $\mathbb{CP}^{d-1}$ under Fubini-Study is $\pi^{D-1} / (D-1)!$ but we scale it to 1 for convenience to convert absolute volumes into probabilities.

\textbf{Proof for the lower bound}. 
First, we use the result presented in Appendix A in~\cite{akibue:2022:covering} to derive the volume  of $\delta$-ball $B_\delta(\phi) := \{ \psi \in \gP(\mathbb{C}^D):   d(\phi, \psi) < \delta \}$ as follows:
\begin{align}
    \forall D \in \mathbb{N}, \forall \delta \in (0,1], \forall \phi \in \gP(\mathbb{C}^D), \quad \mu(B_\delta(\phi)) = \delta^{2(D-1)}.
\end{align}
Here, for the convenient with $\ket{\psi}, \ket{\phi} \in \mathbb{C}^D$, we write $\psi = \ket{\psi}\bra{\psi} \in \gP(\mathbb{C}^D)$ and $\phi = \ket{\phi}\bra{\phi} \in \gP(\mathbb{C}^D)$, and the trace distance $d(\phi, \psi) = \frac{1}{2}  \|\phi - \psi\|_1  $.

Since the total measure is 1 (as normalized) and each ball $B_\delta(\phi)$ has measure $\delta^{2(D-1)}$, to cover the space at least $1 / \delta^{2(d-1)}$ balls are needed. Formally, 
\begin{align}
\gN(\gM, d, \delta) \geq \frac{\mu(\gP(\mathbb{C}^D))}{\max_\phi \mu(B_\delta(\phi))} = \frac{1}{\delta^{2(D-1)}}.
\end{align}

\textbf{Proof for the upper bound}. 
For the upper bound, based on Appendix B in ~\cite{akibue:2022:covering}, we construct an explicit $\delta$-net using a greedy probabilistic method. The idea is to sample random pure states to cover most of the space, then greedily add points to cover the remainder.
Let $\effd = 2(D-1) \geq 2$ (effective real dimension of the projective space). Sample $J_R$ random pure states $\{\phi_j\}_{j=0}^{J_R-1}$ from $\mu^{J_R}$. The expected uncovered measure in the region ($A^c$) not covered by $A := \cup_{j=0}^{J_R-1}B_{\delta_R}(\phi_j)$ is calculated as follows:
\begin{align}
\int d\mu^{J_R} \mu(A^C)
&= \int d\mu^{J_R} \int d\mu(\psi) \prod_{j=1}^{J_R} I[ d(\psi, \phi_j) \geq \delta_R ]\\
&= \int d\mu(\psi) \prod_{j=1}^{J_R} \int d\mu(\phi_j) I[ d(\psi, \phi_j) \geq \delta_R ]\\
&\leq \left( 1 - \delta_R^{\effd} \right)^{J_R} \leq \exp(-J_R \delta_R^{\effd}),
\end{align}
where we use Fubini's theorem and $\mu(B_\delta(\phi)) = \delta^{\effd}$.
Here, $I[X] \in \{0,1\}$ is the indicator function, i.e., $I[X]=1$ iff $X$ is true.
Thus, there exists a set $\{\phi_j\}_{j=0}^{J_R-1}$ with $\mu(A^c) \leq \exp(-J_R \delta^{\effd})$. Now, pack disjoint $\delta_P$-balls ($\delta_P \leq \delta_R \leq 1$) into $A^c$ with centers $\{\psi_j\}_{j=1}^{J_P}$ as much as possible (greedy packing). 
The packing gives the following estimation:
\begin{align}
J_P \leq \frac{\mu(A^c)}{\delta_P^{\effd}} \leq \frac{\exp(-J_R \delta_R^{\effd})}{\delta_P^{\effd}} .
\end{align}

The combined set $\{\phi_j\}_{j=0}^{J_R-1} \cup \{\psi_j\}_{j=0}^{J_P-1}$ covers with radius $\delta_R + \delta_P = \delta$ and size $J = J_R + J_P$.
Set $J_R = \lceil {\effd} \delta_R^{-{\effd}} \ln(\delta_R / \delta_P) \rceil$, $\delta_P = \delta_R / x$, $\delta_R = x \delta / (1+x)$ with $x \geq 1$. This yields:
\begin{align}
J = J_R + J_P \leq \Big \lceil \frac{\effd\ln x}{\delta^{\effd}_R} \Big \rceil + \frac{1}{\delta_R^{\effd}} \leq \frac{1}{\delta^{\effd}}  \bigg\{ \Big( 1 + \frac{1}{x} \Big)^{\effd} (\effd\ln x +1) +1 \bigg\} =  \frac{\alpha(\effd,x)}{\delta^{\effd}},
\end{align}
where $\alpha(\effd,x) = (1 + 1/x)^{\effd} ({\effd} \ln x + 1) + 1$.

Now, we select $x=\effd \ln \effd > 1$ and consider the function 
\begin{align}
    f(D) = \frac{\alpha(\effd, \effd \ln \effd)}{D\ln D}.
\end{align}
Numerically, we can check that $\frac{\partial f}{\partial D}(D=2) > 0$, $\frac{\partial f}{\partial D}(D=3) > 0$, and $\frac{\partial f}{\partial D}(D_0) < 0$ for $D_0 \geq 4$.
If we think $D$ is a continuous real variable, then the derivative $\frac{\partial f}{\partial D}$ has a critical point $D^{*} \in (3, 4)$. 
Numerical calculation provides that $D^{*}  \approx 3.032879$ and $f(D^*) \approx 4.927605 < 5$.
Therefore, we have the following upper bound for $\gN(\gM, d, \delta)$:
\begin{align}
    \gN(\gM, d, \delta) \leq J \leq \frac{D\ln D}{\delta^{\effd}} f(D) \leq \frac{D\ln D}{\delta^{\effd}} f(D^*) <  \frac{5D\ln D}{\delta^{\effd}} = 5 \cdot D \ln(D) \cdot (1/\delta)^{2(D - 1)}.
\end{align}
\end{proof}
Since $\gQ_t \subseteq  \gK$, then we obtain $\gN(\gQ_t, d, \delta) \leq \gN(\gK, d, \delta) = \gN(\gM, d, \delta)$, which is the upper bound in ~\Eqref{eqn:cover:qt:bound}.

\subsection{Proof of Lemma~\ref{lem:approx_distribution}}

The proof constructs a unitary \( V \) acting on the ancilla system \( A \) (with \( n_a \) qubits) and the hidden system \( M \) (with \( n_m = n_a + \lceil \log_2 (1/\varepsilon) \rceil \) qubits) to generate a probability distribution \( p \) that approximates the target distribution \( q \). Following \cite{kurkin:2025:note:universality}, we provide a concrete construction using an Instantaneous Quantum Polynomial (IQP)~\citep{shepherd:2009:iqp,bremner:2010:iqp} circuit architecture. The proof proceeds in three steps: defining the IQP circuit, generating a logical model state, and approximating the target probability distribution.

\subsubsection{Step 1: IQP Circuit Architecture}
IQP circuits form a class of quantum circuits consisting of commuting gates that are diagonal in the $Z$ basis.
A parameterized IQP circuit on $n$ qubits consists of three components: (1) Hadamard gates $H^{\otimes n}$ on all qubits (initialized at $\ket{0}^{\otimes n}$) to create a uniform superposition, (2) a layer of parameterized gates of the form $\exp(i \theta_j Z_{\vg_j})$, where $Z_{\vg_j}$ is a tensor product of Pauli $Z$ operators acting on a subset of qubits specified by the nonzero entries of $\vg_j \in \{0,1\}^n$, and (3) another layer of Hadamard gates $H^{\otimes n}$. Formally, an IQP circuit is $U = H^{\otimes n} D H^{\otimes n}$, where $D = \exp\left(i \sum_j \theta_j Z_{\vg_j}\right)$.
A parameterized IQP circuit with hidden units is a parameterized IQP circuit in which a chosen subset of qubits is traced out.

IQP circuits are particularly useful for sampling problems and exhibit properties that make them hard to simulate classically under certain complexity assumptions. In our framework, parameterized IQP circuits with hidden units provide an efficient parameterization for generating complex probability distributions over measurement outcomes.

To describe the parameterized IQP circuits with hidden units, we initialize the system in the state \( |0\rangle^{\otimes (n_a + n_m)} \) and apply the unitary \( V \) as:
\begin{enumerate}
    \item \textbf{First Layer}: Apply Hadamard gates to all qubits, \( H^{\otimes (n_a + n_m)} \), creating a uniform superposition:
    \begin{align}
            |0\rangle^{\otimes (n_a + n_m)} \to \frac{1}{\sqrt{2^{n_a + n_m}}} \sum_{\vj \in \{0,1\}^{n_a}} \sum_{\vk \in \{0,1\}^{n_m}} |\vj\rangle_A |\vk\rangle_M.
    \end{align}
    
    \item \textbf{Middle Layer}: Apply a parameterized diagonal gate \( D(\vth) = \prod_{\vj \in \{0,1\}^{n_a}, \vk \in \{0,1\}^{n_m}} e^{i \theta_{\vj,\vk} Z_{\vj,\vk}} \), where \( Z_{\vj,\vk} \) is a tensor product of Pauli-\( Z \) operators acting on subsets of qubits in \( A \otimes M \), and \( \theta_{\vj,\vk} \in \mathbb{R} \) are trainable phases encoding the target distribution. The resulting state is:
    \begin{align}
    |\psi\rangle = \frac{1}{\sqrt{2^{n_a + n_m}}} \sum_{\vj \in \{0,1\}^{n_a}} \sum_{\vk \in \{0,1\}^{n_m}} e^{i \theta_{\vj,\vk}} |\vj\rangle_A |\vk\rangle_M.
    \end{align}
    This is referred to as the Uniform Mixture Approximation (UMA) state.
    \item \textbf{Final Layer}: Apply Hadamard gates (\( H^{\otimes n_a} \otimes I^{\otimes n_m} \)) to the ancilla system to prepare the state for measurement in the computational basis of \( A \).
\end{enumerate}

\subsubsection{Step 2: Generating the Logical Model State}
To approximate the target distribution \( q = \{ q_b \}_{b =0,1,\ldots,2^{n_a}-1} \), we consider a logical model state defined by a mapping \( v: \{0,1,\ldots,2^{n_m}-1\} \to \{0,1,\ldots,2^{n_a}-1\} \), where \( n_m > n_a \). 
Here, we use bold letters for binary index, such as \( \vb \in \{0,1\}^{n_a} \) and normal letters for their decimal equivalent, such as 
 \( b \in \{0,1,\ldots,2^{n_a}-1\} \). Therefore, $v(\vb)$ has the same meaning with $v(b)$.
We define the state:
\begin{align}
|\psi'\rangle = \frac{1}{\sqrt{2^{n_m}}} \sum_{k=0}^{2^{n_m}-1} |v(k)\rangle_A |k\rangle_M,
\end{align}
where \( v(k) \) maps the hidden state index \( k \) to an ancilla state index. Measuring the ancilla system \( A \) in the computational basis yields outcome \( \vb \in \{0,1\}^{n_a} \) with probability:
\begin{align}
p(\vb) = \frac{|\{ \vk \in \{0,1\}^{n_m} : v(\vk) = \vb \}|}{2^{n_m}}.
\end{align}
We show that the UMA state \( |\psi\rangle \) can be transformed into \( |\psi'\rangle \) via a unitary, and that applying \( H^{\otimes n_a} \otimes I^{\otimes n_m} \) to \( |\psi'\rangle \) produces:
\begin{align}
(H^{\otimes n_a} \otimes I^{\otimes n_m}) |\psi'\rangle = \frac{1}{\sqrt{2^{n_a + n_m}}} \sum_{\vj \in \{0,1\}^{n_a}} \sum_{\vk \in \{0,1\}^{n_m}} (-1)^{v(\vk) \cdot \vj} |\vj\rangle_A |\vk\rangle_M,
\end{align}
where \( v(\vk) \cdot \vj \) is the inner product modulo 2. This state is a UMA state with phases \( \theta_{\vj,\vk} = 0 \) if \( v(\vk) \cdot \vj \) is even, and \( \theta_{\vj,\vk} = \pi \) if \( v(\vk) \cdot \vj \) is odd. Applying \( H^{\otimes n_a} \otimes I^{\otimes n_m} \) to this state recovers \( |\psi'\rangle \), and measuring \( A \) in the computational basis yields the same probability distribution \( p(\vb) \).

\subsubsection{Step 3: Probability Approximation}
The probability of outcome \( \vb \in \{0,1\}^{n_a} \) is:
\begin{align}
p(\vb) = \frac{c_\vb}{2^{n_m}}, \quad \text{where} \quad c_\vb = |\{ \vk \in \{0,1\}^{n_m} : v(\vk) = \vb \}|,
\end{align}
and \( c_\vb \) is the number of hidden states mapped to outcome \( \vb \). To ensure \( p(\vb) \approx q(\vb) \), we choose the mapping \( v \) as follows:
\begin{enumerate}
    \item For each \( \vb \in \{0,1\}^{n_a} \), set \( c_\vb = \lfloor q(\vb)  2^{n_m} \rfloor \). Compute the sum \( S = \sum_{\vb \in \{0,1\}^{n_a}} c_b \leq 2^{n_m} \), with \( 2^{n_m} - S \leq 2^{n_a} \).
    \item If \( S < 2^{n_m} \), distribute the remaining \( 2^{n_m} - S \) states by incrementing \( c_\vb = \lfloor q(\vb) 2^{n_m} \rfloor + 1 \) for the first \( 2^{n_m} - S \) outcomes \( \vb \), ensuring \( \sum_{\vb \in \{0,1\}^{n_a}} c_\vb = 2^{n_m} \).
    \item Assign hidden states \( k = 0, 1, \ldots, 2^{n_m}-1 \):
        \begin{itemize}
            \item Assign the first \( c_{\vb_1} \) states (\( k = 0, \ldots, c_{\vb_1}-1 \)) to \( v(k) = \vb_1 \) (e.g., \( \vb_1 = 00\ldots0 \)).
            \item Assign the next \( c_{\vb_2} \) states (\( k = c_{\vb_1}, \ldots, c_{\vb_1} + c_{\vb_2}-1 \)) to \( v(\vk) = \vb_2 \), and continue until all \( 2^{n_m} \) states are assigned.
        \end{itemize}
\end{enumerate}
The error for each outcome is:
\begin{align}
|q(\vb) - p(\vb)| = \left| q(\vb) - \frac{c_\vb}{2^{n_m}} \right| = \frac{\left|q(\vb)2^{n_m} - c_{\vb}\right|}{2^{n_m}} \leq \frac{1}{2^{n_m}}.    
\end{align}

The total variation distance is:
\begin{align}
\delta(p, q) = \frac{1}{2} \sum_{\vb \in \{0,1\}^{n_a}} |p(\vb) - q(\vb)| \leq \frac{1}{2} \cdot 2^{n_a} \cdot \frac{1}{2^{n_m}} = \frac{2^{n_a}}{2^{n_m + 1}} = \dfrac{1}{2^{n_m - n_a + 1}}.
\end{align}
Choosing \( n_m = n_a + \lceil \log_2 (1/\varepsilon) \rceil \), we have:
$2^{n_m - n_a} \geq \dfrac{1}{\varepsilon} \implies \dfrac{1}{2^{n_m - n_a}} \leq \varepsilon.$
Thus:
$
\delta(p, q) \leq \dfrac{1}{2} \cdot \dfrac{1}{2^{n_m - n_a}} \leq \dfrac{\varepsilon}{2}.
$
This completes the proof, with the unitary \( V = (H^{\otimes n_a} \otimes I^{\otimes n_m}) \cdot D(\vth) \cdot H^{\otimes (n_a + n_m)} \) explicitly constructed to achieve the desired approximation.

\subsection{Encoding 3-D Molecules to Quantum States}\label{appx:encoding:3D}

This process ensures compatibility with quantum amplitude encoding, which requires input vectors to be normalized to unit norm. The molecules in the QM9 dataset—small organic compounds with up to 9 heavy atoms (C, N, O, F) plus hydrogens, totaling up to 29 atoms per molecule—are represented as attributed point clouds. Each molecule (index $j=0,1,\ldots,N-1$ in the dataset) is denoted as $\{(\vv^j_i, \va^j_i)\}_{i=0}^{m_j-1}$, where $m_j$ is the number of atoms, $\vv^j_i \in \mathbb{R}^3$ are the 3-D coordinates, and $\va^j_i \in \{0,1\}^k$ is the one-hot encoded atom type, corresponding to $j$th molecule. 
Here, for simplification, we only consider heavy atoms in our model, leading to the number of atom types $k=4$.
The dataset can be represented by $[\{(\vv^j_i, \va^j_i)\}_{i=0}^{m_j-1}]_{j=0}^{M-1}$.

The encoding needs to address challenges of arbitrary translations, rotations, atom ordering, and quantum normalization constraints, where the amplitudes of encoded quantum states are non-negative real values to simplify the encoding and reconstruction process. 
We adopt the encoding method in~\citet{rathi:2023:3dqae:encoding,wu:qvae-mole:2024} with details in the following steps.
\begin{enumerate}
\item \textbf{Structural normalization for unique representation:} Atoms are reordered using canonical SMILES strings generated via RDKit toolkit~\citep{rdkit}, ensuring a consistent, graph-based ordering independent of the original input.
\item \textbf{Conformation fixing (translation and rotation):} The molecule is centered by subtracting the centroid (center of mass) from all coordinates, aligning it to the origin. It then rotates the position of the first atom in the SMILES string onto the z-axis. 
\item \textbf{Positive octant adjustment:} After centering and rotation, coordinates may include negative values. The minimum and maximum coordinates across all dimensions, $v_{\min,a}$ and $v_{\max,a}$, are determined as $v_{\min,a} = \min_{j=0,\ldots,N-1;i=0,\ldots,m_j-1} v_{j,a}^i$ and $v_{\max,a} = \max_{j=0,\ldots,N-1;i=0,\ldots,m_j-1} v_{j,a}^i$, where $v^j_{i,a} \in \mathbb{R}$ is the coordinate of $\vv^j_i$ along axis $a \in \{x, y, z\}$. 
Consider the side length $s=\max_{a\in {x,y,z}}(v_{\max, a} -v_{\min, a})$, the coordinates of each atom can be shifted by $(v_{\min,x}, v_{\min,y}, v_{\min,z})$ and re-scaled by $s$, bounding all $x_i, y_i, z_i \in [0, 1]$.
\item \textbf{Introduction of auxiliary value and per-atom vector construction:} For each normalized atom $i$ with position $\tilde{\vv}_i = (x_i, y_i, z_i)$, an auxiliary value $\sqrt{3 - x_i^2 - y_i^2 - z_i^2}$ is added, forming a 4D vector $(x_i, y_i, z_i, \sqrt{3 - x_i^2 - y_i^2 - z_i^2})$ with norm $\sqrt{3}$. The atom type $a_i$ (one-hot vector) has norm 1, yielding a per-atom vector \\
$(x_i, y_i, z_i, a_i[1], \dots, a_i[k], \sqrt{3 - x_i^2 - y_i^2 - z_i^2})$ with total norm $\sqrt{4} = 2$ and length $4 + k = 9$ (for $k=5$).
\item \textbf{Concatenation, global normalization, and amplitude encoding:} Per-atom vectors are concatenated into a per-molecule vector of length $m_j \times (4 + k)$ with norm $2\sqrt{m_j}$. This vector is divided by $2\sqrt{m_j}$ to achieve unit norm. If $m_j < m_{\max} = 9$ (QM9's maximum atoms in our setting), it is zero-padded to $m_{\max} \times (4 + k) = 72$. The unit-norm vector is encoded via amplitude encoding into $|\psi\rangle = \sum_{j=0}^{2^n - 1} \alpha_j |j\rangle$, where $\{\alpha_j\}$ are the vector elements (padded with zeros if needed) and $n = \lceil \log_2(m_{\max} \times (4 + k)) \rceil = 7$ qubits.
\end{enumerate}

Specifically, for a molecule with $m$ atoms, the initial state is given by:
\begin{align}
    |\psi_0\rangle = \frac{1}{2\sqrt{m}} \sum_{i=0}^{m-1} \left( x_i |r_i\rangle + y_i |r_i + 1\rangle + z_i |r_i + 2\rangle + |r_i + 3 + t_i\rangle + \sqrt{3 - x_i^2 - y_i^2 - z_i^2} \, |r_i + k + 3\rangle \right) + \sum_{j = m(4+k)}^{2^n - 1} 0 |j\rangle,
\end{align}
where $ r_i = (k + 4) i $ defines the base index for the $ i $-th atom's block in the state vector, $ t_i $ is the integer index corresponding to the atom type of the $ i $-th atom (e.g., 0 for C, 1 for N, 2 for O, 3 for F, reflecting the one-hot encoding with a single 1 at the appropriate position), and $ k = 4 $ is the number of atom types. 

\subsection{More results in learning quantum distribution}\label{apx:more:results}

We present additional results on learning the multi-cluster quantum distribution, as described in the main text.
Figure~\ref{fig:vary:ntrain} illustrates the variation of the 1-Wasserstein distance between the generated test ensemble and the true ensemble with the number of training states $N$. The number of steps in the Incremental MPE is set to $K=10$, with each unitary $V_k$ containing $L=10$ layers (for 4-qubit states) and $L=20$ layers (for 6-qubit states).
Here, $N$ is tractable in our tasks.

\begin{figure}[h]
\centering
\includegraphics[width=0.6\linewidth]{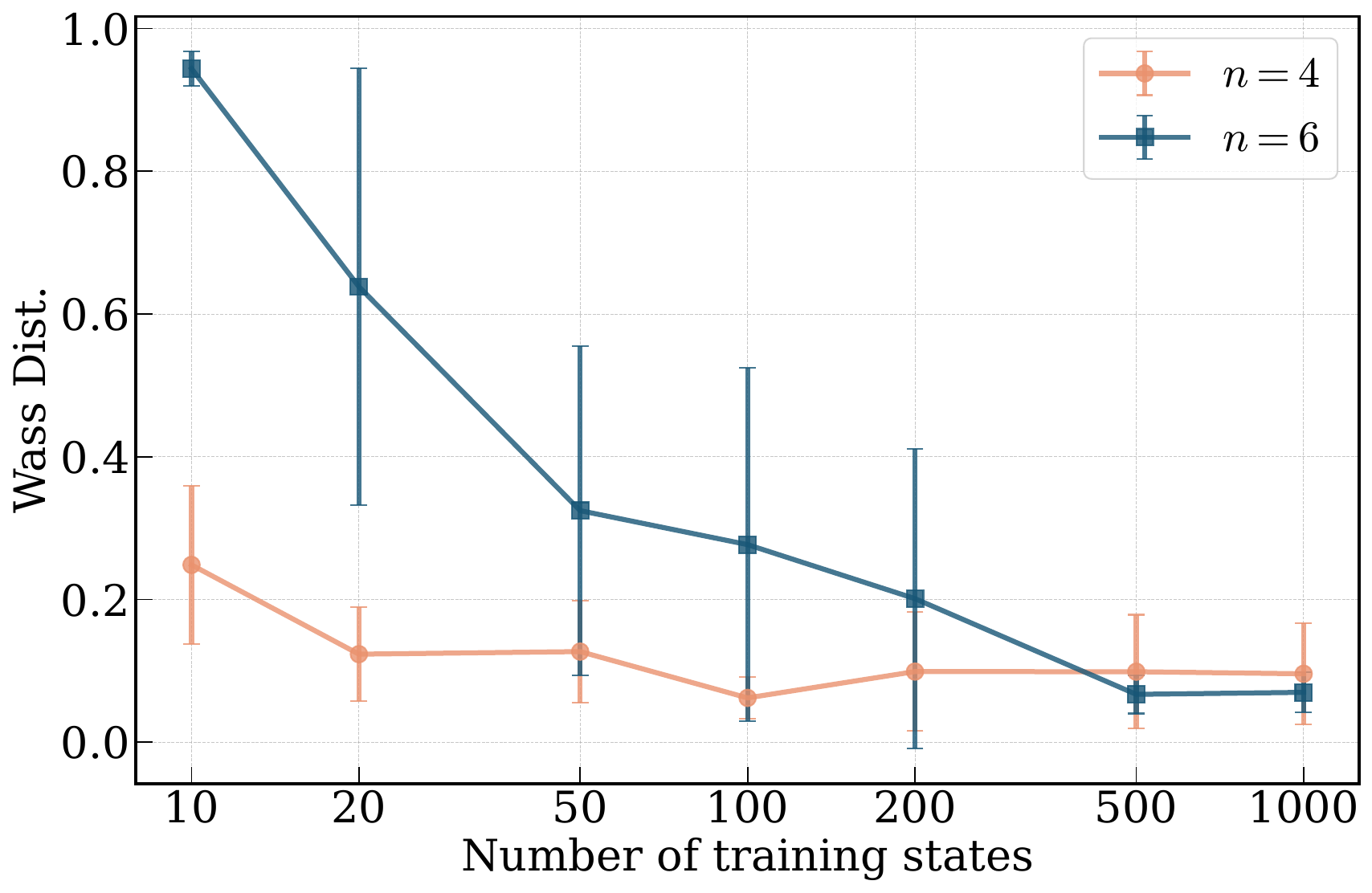}
\caption{The 1-Wasserstein distance between the generated ensemble and the true ensemble, varying with the number of training states $K$ in the Incremental MPE framework for learning the multi-cluster quantum distributions of $n$-qubit quantum states ($n=4,6$). The solid lines represent the average accuracy over 10 trials (with error bars).}
\label{fig:vary:ntrain}
\end{figure}

\begin{figure}[h]
\centering
\includegraphics[width=0.9\linewidth]{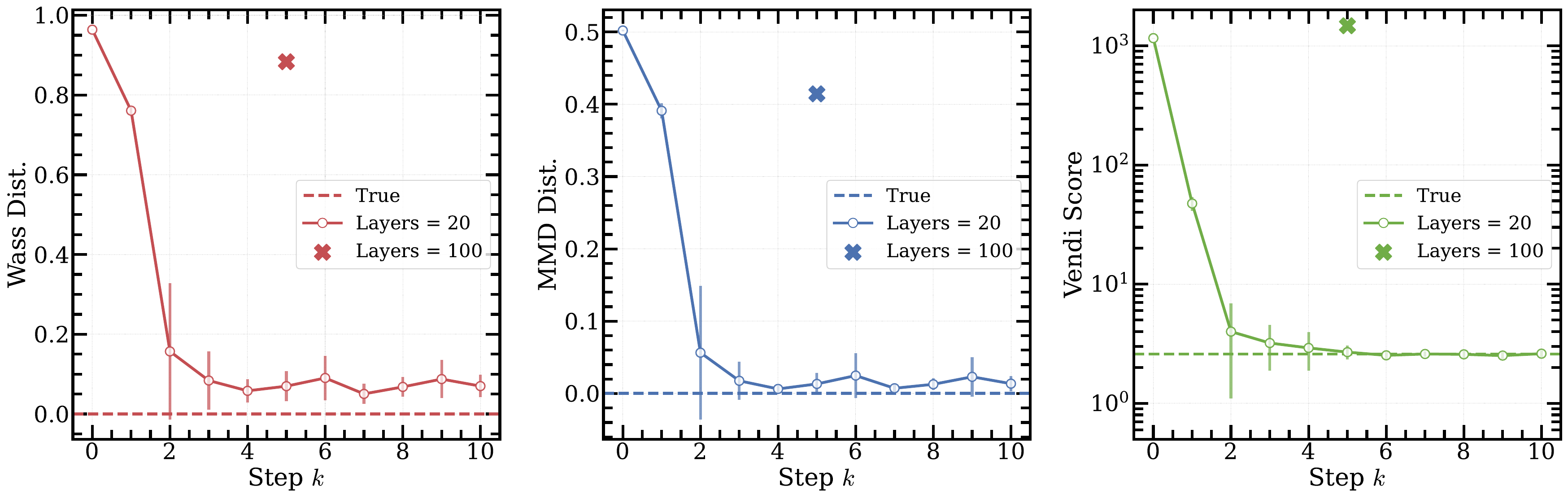}
\caption{Evaluation metric variation with the number of incremental steps $K$ in the Incremental MPE framework for learning the multi-cluster quantum distributions. 
The solid lines represent the average accuracy over 10 trials (with error bars) where  $L=20$ layers are trained at each step.
The cross markers represent the standard training with $L=100$ layers without incremental steps. We plot the positions of the cross markers at step $k=5$ to illustrate that training 20 layers over 5 steps (a total of 100 layers) is significantly better than training 100 layers at once.}
\label{fig:compare:layers}
\end{figure}

We present empirical results showing that training a large number of layers simultaneously yields poor performance, whereas maintaining the same total number of layers but iteratively optimizing shallow circuits (with a small number of layers) can lead to superior results. Figure~\ref{fig:compare:layers} illustrates the variation in the evaluation metric with the number of incremental steps in the Incremental MPE framework for learning multi-cluster quantum distributions. Here, $L=20$ layers are trained for 2000 epochs at each incremental step, resulting in a a total of $2000\times k$ training epochs up to the $k$-the step. We compare the result at $k=5$ with standard training of $L=100$ layers for $10^4$ epochs without Incremental MPE. Thus, the total number of training epochs is the same for both methods, but the Incremental MPE  requires only 1/10 of the parameters compared to the standard approach. The Incremental MPE yields significantly better results, while the standard method becomes stuck during training, empirically supporting our observation that Incremental MPE provides an effective strategy for practical training.

We further investigate the behavior of the loss function in the incremental MPE framework compared to standard training. Figure~\ref{fig:compare:loss} depicts the loss functions for learning the multi-cluster quantum distributions of $n=6$ qubits. We compare the incremental MPE framework with $L=10$ layers per incremental step (red curve) and direct training with $L=100$ layers (blue curve). A barren plateau-like phenomenon is evident when training a large number of layers simultaneously (blue curve). However, even with a large total number of layers, gradually training in incremental steps enables the loss to converge to a significantly lower value. This empirically confirms the effectiveness of our incremental MPE framework in mitigating the barren plateau problem.

\begin{figure}[h]
\centering
\includegraphics[width=0.9\linewidth]{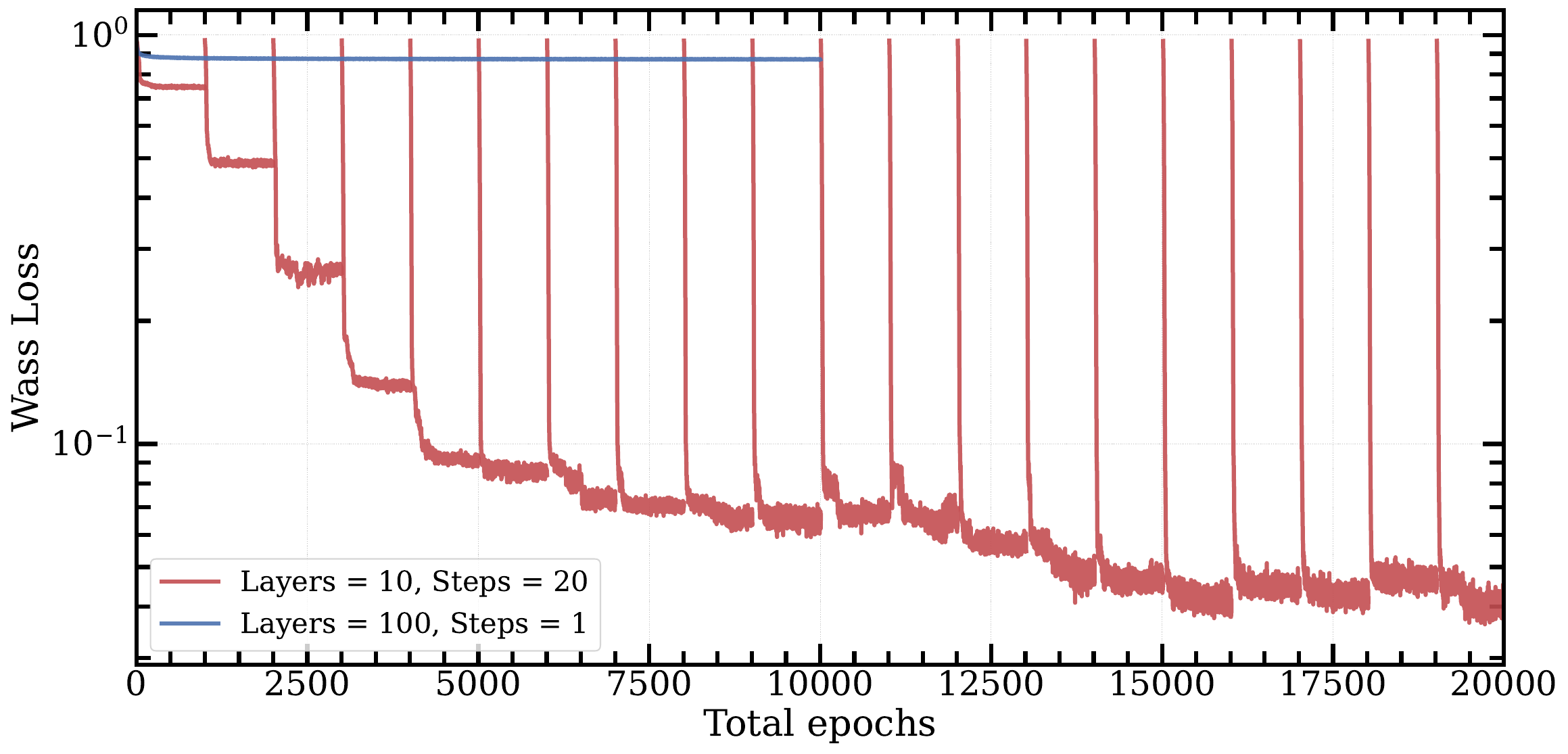}
\caption{The training loss functions (1-Wasserstein distance) of the Incremental MPE (red) and standard training (blue) for learning the multi-cluster quantum distributions of $n=6$ qubits. Here, we compare training Incremental MPE with $L=10$ layers over 20 incremental steps, and standard training with $L=100$ layers. The solid lines represent the median values at each epoch over 20 trials.}
\label{fig:compare:loss}
\end{figure}

On learning molecular quantum states using a subset of the QM9 dataset, Fig.~\ref{fig:vary:steps} illustrates the variation of evaluation metrics as a function of the number of incremental steps $K$ in the Incremental MPE framework. A larger $K$ generally results in lower metric values, indicating improved performance, though the metrics saturate beyond a sufficiently large $K$. Increasing the number of qubits $n_f$ in the auxiliary system $F$ can further enhance performance.

\begin{figure}[h]
\centering
\includegraphics[width=0.9\linewidth]{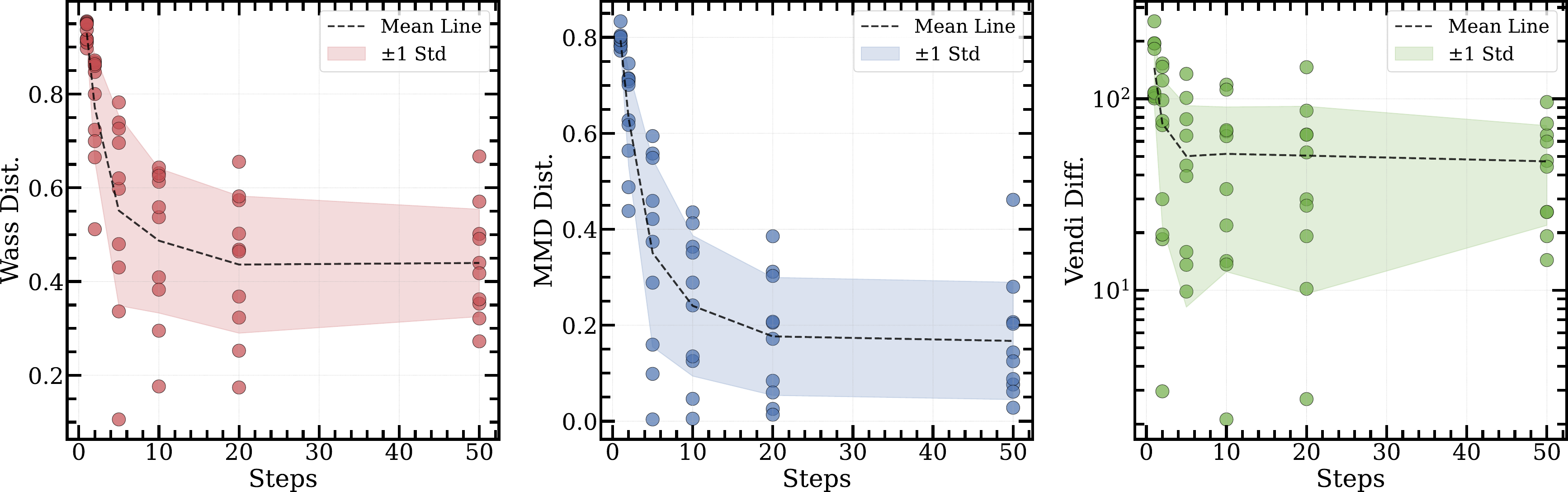}
\caption{Evaluation metric variation with the number of incremental steps $K$ in the Incremental MPE framework for learning quantum distributions of molecular data from a QM9 subset. Circle markers represent individual trials, dotted lines indicate the mean over 10 trials, and shaded regions denote one standard deviation.}
\label{fig:vary:steps}
\end{figure}

\end{document}